\renewcommand{\phi}{\varphi}
\DeclareMathOperator{\len}{len}
\DeclareMathOperator{\supp}{supp}
\newcommand{\teopen}{[\kern-1.5pt [}
\newcommand{\teclose}{]\kern-1.5pt ]}
\newcommand{\eqopen}{[\kern-1.5pt [\kern-1.5pt [}
\newcommand{\eqclose}{]\kern-1.5pt ]\kern-1.5pt ]}
\newcommand{\ac}[1]{[{#1}]}
\newcommand{\fac}[1]{[{#1}]_F}
\newcommand{\te}[1]{\teopen{#1}\teclose}
\newcommand{\fte}[1]{\teopen{#1}\teclose_F}
\newcommand{\eq}[1]{\eqopen{#1}\eqclose}
\newcommand{\feq}[1]{\eqopen{#1}\eqclose_F}
\newcommand{\C}{{\mathbb C}}
\newcommand{\N}{{\mathbb N}}
\newcommand{\Q}{{\mathbb Q}}
\newcommand{\R}{{\mathbb R}}
\newcommand{\Z}{{\mathbb Z}}
\newcommand{\laurc}{\C[z,z^{-1}]}
\newcommand{\laurr}{\R[z,z^{-1}]}
\newcommand{\laurf}{F[z,z^{-1}]}
\newcommand{\laurfc}{\conj{F}[z,z^{-1}]}
\newcommand{\laurcu}{\C[z,z^{-1}]^\times}
\newcommand{\laurfu}{F[z,z^{-1}]^\times}
\newcommand{\laurfcu}{\conj{F}[z,z^{-1}]^\times}
\newcommand{\laure}{E[z,z^{-1}]}
\newcommand{\fa}{{\mathfrak a}}
\newcommand{\fb}{{\mathfrak b}}
\newcommand{\fc}{{\mathfrak c}}
\newcommand{\fd}{{\mathfrak d}}
\newcommand{\Cu}{\C^\times}
\newcommand{\Fu}{F^\times}
\newcommand{\Ru}{R^\times}
\newcommand{\conj}[1]{\overline{#1}}
\newcommand{\bigcups}[1]{\bigcup_{\substack{#1}}}
\newcommand{\ceil}[1]{\left\lceil{#1}\right\rceil}
\newtheorem{theorem}{Theorem}[section]
\newtheorem{proposition}[theorem]{Proposition}
\newtheorem{lemma}[theorem]{Lemma}
\newtheorem{corollary}[theorem]{Corollary}
\newtheorem{construction}[theorem]{Construction}
\newtheorem{openproblem}[theorem]{Open Problem}
\theoremstyle{definition}
\newtheorem{remark}[theorem]{Remark}
\begin{document}

\title{Sequences with identical autocorrelation functions}
\author{Daniel J. Katz}\thanks{This paper is based upon work supported in part by the National Science Foundation under Grants 1815487 and 2206454.
This research was done using services provided by the Open Science Grid Consortium \cite{osg07,osg09,osg06,osg15}, which is supported by the National Science Foundation awards 2030508 and 1836650.
Adeebur Rahman was supported by the Ramanujan Research Scholarship from the Department of Mathematics, California State University, Northridge.  Michael J Ward was supported by the Efrem Ostrow Scholarship from the Department of Mathematics, California State University, Northridge.}
\address{Department of Mathematics, California State University, Northridge, \: United States}
\author{Adeebur Rahman}
\address{Department of Mathematics, California State University, Northridge, \: United States}
\author{Michael J Ward}
\address{Department of Mathematics, California State University, Northridge, \: United States and University of California, Riverside}

\date{04 January 2025}
\begin{abstract}
Aperiodic autocorrelation is an important indicator of performance of sequences used in communications, remote sensing, and scientific instrumentation.
Knowing a sequence's autocorrelation function, which reports the autocorrelation at every possible translation, is equivalent to knowing the magnitude of the sequence's Fourier transform.
The phase problem is the difficulty in resolving this lack of phase information.
We say that two sequences are equicorrelational to mean that they have the same aperiodic autocorrelation function.
Sequences used in technological applications often have restrictions on their terms: they are not arbitrary complex numbers, but come from a more restricted alphabet.
For example, binary sequences involve terms equal to only $+1$ and $-1$.
We investigate the necessary and sufficient conditions for two sequences to be equicorrelational, where we take their alphabet into consideration.
There are trivial forms of equicorrelationality arising from modifications that predictably preserve the autocorrelation, for example, negating a binary sequence or reversing the order of its terms.
By a search of binary sequences up to length $44$, we find that nontrivial equicorrelationality among binary sequences does occur, but is rare.
An integer $n$ is said to be equivocal when there are binary sequences of length $n$ that are nontrivially equicorrelational; otherwise $n$ is unequivocal.
For $n \leq 44$, we found that the unequivocal lengths are $1$--$8$, $10$, $11$, $13$, $14$, $19$, $22$, $23$, $26$, $29$, $37$, and $38$.
We pose open questions about the finitude of unequivocal numbers and the probability of nontrivial equicorrelationality occurring among binary sequences.
\end{abstract}
\maketitle
\section{Introduction}
In many physical measurements of wave phenomena, detectors are unable to discern phases.  This loss of phase information is called the phase problem, a terminology that arose in x-ray crystallography, where a diffraction pattern gives the magnitude of the Fourier transform of the electron density without the phase information \cite{Bendory-Edidin}.
Knowing the magnitude of the Fourier transform is the same as knowing the autocorrelation of the electron density, which in general is not sufficient information to recover the electron density itself.
Classical x-ray crystallography provides the periodic autocorrelation of electron density of the contents of a unit cell of a crystal.
Modern imaging techniques have prompted researchers to also investigate phase retrieval in the aperiodic regime; see \cite[pp.~1491--1492]{Bendory-Edidin} and 
\cite{Shechtman-Eldar-Cohen-Chapman-Miao-Segev}.
This paper concerns itself with the aperiodic one-dimensional discrete problem of phases, that is, the extent to which one can deduce a sequence from its aperiodic autocorrelation.
Autocorrelation of sequences is important in many applications in communications and remote sensing where accurate timing and synchronization are required; see \cite{Golomb-1967, Golomb-Gong}.
As some examples, many foundational digital communications protocols such as code-division multiple access (CDMA) and orthogonal frequency-division multiplexing (OFDM) use low autocorrelation sequences, as do pulse compression schemes for efficient operation of radar.
When a CDMA system uses a sequence for modulation, the sequence's aperiodic autocorrelation determines its periodic and negaperiodic (also known as odd periodic) correlation functions, both which are important in determining the performance of the system \cite[Sec.\ V.B]{Sarwate-Pursley}.
Therefore, aperiodic autocorrelation can be viewed as the central object of interest in such systems.

Because we consider aperiodic autocorrelation, a {\it sequence} is any doubly infinite sequence $f=(\ldots,f_{-1},f_0,f_1,f_2,\ldots)$ of complex numbers such that only finitely many of the terms are nonzero.
We identify this sequence with $f(z)=\sum_{j \in \Z} f_j z^j \in \laurc$, where $\laurc$ is the ring of Laurent polynomials with complex coefficients.
Whenever we simply write a letter like ``$g$'' for a Laurent polynomial, it should be interpreted as shorthand for ``$g(z)$''.  Sometimes we write the full ``$g(z)$'' notation, especially when distinguishing $g(z)$ from other polynomials derived from $g(z)$ such as $g(-z)$ or $g(z^2)$.
If $R$ is any ring, then $R^\times$ denotes its group of units, and we say that two elements $f$ and $g$ are {\it $R$-associates} to mean that there is some $u \in R^\times$ such that $f= u g$.
Notice that the units of $\laurc$ are monomials with nonzero coefficients, that is, elements of the form $c z^j$ with $j \in \Z$ and $c \in \Cu$.
Multiplication by a unit in the Laurent polynomial formalism shifts and scales a sequence, which for our purposes produces an equivalent sequence, so we are only interested in sequences up to the relation of being $\laurc$-associates.

The {\it support} of a sequence $f$, written $\supp f$, is the set $\{j \in \Z: f_j\not=0\}$.
A {\it segment} is a set of consecutive integers.
The {\it length} of a sequence $f$, written $\len f$, is the cardinality of the smallest segment that contains $\supp f$.
A {\it contiguous sequence} $f$ is a sequence where $\supp f$ is a segment.
For any positive integer $m$, an {\it $m$-ary sequence} is a contiguous sequence where $f_j$ is an $m$th root of unity in $\C$ for every $j \in \supp f$; when $m=2$, we have a {\it binary sequence}, where $f_j \in \{1,-1\}$ for every $j \in \supp f$.
For readers more familiar with considering a binary sequence as a vector of $0$s and $1$s from the binary finite field $\Z/2\Z$, we remark that we can represent such a sequence as a vector of $+1$s and $-1$s in $\C$ by applying the transformation $x \mapsto (-1)^x$ to the terms.
Correlation measures the resemblance between two binary vectors by counting the number of coordinates where the vectors agree and deducting the number of coordinates where they disagree, so using the $+1/-1$ representation of binary sequences makes correlation equal to the dot product of the two vectors.
See \cite[p.~14]{Golomb-1994} for how the discussion of binary sequences passes naturally from $0/1$ representation to $+1/-1$ representation, and \cite[pp.~595--596]{Sarwate-Pursley} on why more general sequences whose terms lie in $\C$ are considered in communications systems.
For these more general sequences, correlation is still a dot product calculated in $\C$.
It is important that the correlation calculation happens in a ring of characteristic $0$ (such as $\C$) and not in a finite ring (such as $\Z/2\Z$, in which the terms of $0/1$-binary sequences lie) because finite rings have modular arithmetic, which would result in correlation values vanishing due to modular reduction in many cases where there is substantial agreement between sequences.
For example, $m$-ary sequences can be considered as vectors of elements of $\Z/m\Z$, but for the purposes of calculating correlation, one uses the map $x \mapsto \exp(2\pi i x/m)$ to transform them into sequences whose terms are complex $m$th roots of unity.

For a sequence $f=(\ldots,f_{-1},f_0,f_1,f_2,\ldots)$ and an integer $s$, the {\it aperiodic autocorrelation of $f$ at shift $s$} is
\[
C_f(s)=\sum_{j \in \Z} f_{j+s} \conj{f_j}.
\]
Note that the finite support of $f$ guarantees that $C_f(s)\not=0$ for only finitely many $s \in \Z$.
The Laurent polynomial interpretation of sequences provides a convenient formalism for calculating autocorrelation by thinking of sequences as functions on the complex unit circle.
To use this formalism, for $f(z)=\sum_{j \in \Z} f_j z^j \in \laurc$ we define the {\it conjugate of $f(z)$}, written $\conj{f(z)}$, to be $\sum_{j \in \Z} \conj{f_j} z^{-j}$.
(Notice that conjugation of a sequence reverses the order of its terms and then replaces each one with its complex conjugate.)
A {\it self-conjugate} element $f$ of $\laurc$ is one for which $\conj{f}=f$.
With conjugation defined,  one readily shows that
\[
f(z) \conj{f(z)} = \sum_{s \in \Z} C_f(s) z^s.
\]
We call $f\conj{f}$ the {\it autocorrelation function of $f$} because it organizes each autocorrelation value $C_f(s)$ as the coefficient of $z^s$, so that one can read off the autocorrelation value at any shift.
Notice that all autocorrelation functions are self-conjugate.

If $S \subseteq \laurc$, then the {\it conjugate of $S$}, written $\conj{S}$, is defined to be $\{\conj{s}: s \in S\}$.
We say that such a set $S$ is {\it self-conjugate} to mean $\conj{S}=S$.
If $F$ is a self-conjugate subfield of $\C$, then conjugation restricts to an automorphism of $F[z,z^{-1}]$ that is its own inverse.
One self-conjugate subfield of $\C$ is $\C$ itself, but other self-conjugate subfields include $\Q$ (where terms of binary sequences lie) and $\Q(e^{2\pi i/m})$ for each positive integer $m$, which is the field in which the terms of $m$-ary sequences lie.

We are interested in the extent to which the autocorrelation function $\sum_{s\in\Z} C_f(s) z^s$ determines the sequence $f$ from which it is derived.
We say that two sequences $f$ and $g$ are {\it equicorrelational} to mean that their autocorrelation functions are equal up to a positive real constant scalar multiple, i.e., $f\conj{f}=c g\conj{g}$ for some positive $c \in \R$.
This is equivalent to saying $f\conj{f}$ and $g\conj{g}$ are $\laurc$-associates (see \cref{Larry} for a proof), so equicorrelationality is an equivalence relation.
Note that since $\laurc$ is an integral domain, no nonzero sequence is equicorrelational to the zero sequence.
Since $C_f(0)$ is the squared Euclidean norm of the sequence $f$, two nonzero sequences are equicorrelational if and only if their normalizations with Euclidean norm $1$ have identical autocorrelation functions.

Associate sequences are equicorrelational (see \cref{Ursula} for a proof), and we should note that a sequence $f$ is equicorrelational to $\conj{f}$ (as well as any sequence associate to $\conj{f}$).
We say that two sequences in $f, g \in \laurc$ are {\it trivially equicorrelational} to mean that they are either associate to each other or one is associate to the conjugate of the other.
If $f$ and $g$ are equicorrelational but not trivially equicorrelational, we say that they are {\it nontrivially equicorrelational}.
This paper studies when nontrivial equicorrelationality can occur; when this happens, the autocorrelation function does not determine the sequence up to shifting, scaling, and conjugation (the last of which, it should be recalled, involves both reversal of the sequence and conjugation of every term).
Trivial equicorrelationality is an equivalence relation that refines equicorrelationality and is refined by the associate relation.

For $f \in \laurc$, the {\it associate class of $f$}, written $\ac{f}$, is the set of all $\laurc$-associates of $f$, so $\ac{f}=\{c z^j f: c \in \Cu, j \in \Z\}$.
The {\it trivial equicorrelationality class of $f$}, written $\te{f}$, is the set of all sequences that are trivially equicorrelational to $f$, so $\te{f}=[f]\cup [\conj{f}]$.
The {\it equicorrelationality class of $f$}, written $\eq{f}$, is the set of all sequences that are equicorrelational to $f$, and is a union of trivial equicorrelationality classes.
If $F$ is a self-conjugate subfield of $\C$ and $f \in \laurf$, then the {\it $\laurf$-associate class of $f$}, written $\fac{f}$, is the set of all associates of $f$ in $\laurf$, which is just $\ac{f} \cap \laurf$ (see \cref{Ernie} for a proof).
The {\it $F$-trivial equicorrelationality class of $f$}, written $\fte{f}$, is the set of all sequences in $\laurf$ that are trivially equicorrelational to $f$, that is, $\te{f} \cap \laurf$, which is equal to $\fac{f} \cup \fac{\conj{f}}$ (see \cref{Felix} for a proof).
The {\it $F$-equicorrelationality class of $f$}, written $\feq{f}$, is the set of all sequences in $\laurf$ that are equicorrelational to $f$, that is, $\eq{f} \cap \laurf$.

A {\it generalized palindrome} is a sequence $f$ in $\laurc$ that is a $\laurc$-associate of its own conjugate, that is, $\conj{f} \in \ac{f}$.
If $F$ is a self-conjugate subfield of $\C$, each $\laurf$-associate class either consists entirely of generalized palindromes and is self-conjugate, or else the class has no generalized palindromes and its conjugate is a different $\laurf$-associate class (see \cref{Herbert} for a proof).

We are interested in how the alphabet of values that can occur as sequence terms influences equicorrelationality.
For a given sequence $f$, many of the sequences that are equicorrelational to $f$ might have terms that do not reside in the same alphabet that was used to construct $f$.
Our first result shows how restriction of sequence terms to a self-conjugate subfield of $\C$ constrains the possibilities for equicorrelationality.
In the following theorem, we use the fact that Laurent polynomial rings over fields are unique factorization domains, and throughout this paper, we set $\N=\{0,1,2\ldots\}$.
\begin{theorem}\label{Barbara}
Let $F$ be a self-conjugate subfield of $\C$ and $f \in \laurf$.
If $f=0$, then $\feq{f}=\fte{0}=\fac{0}=\{0\}$.
If $f\not=0$, then suppose that
\begin{equation}\label{Daisy}
f=u f_1^{a_1}\cdots f_m^{a_m} g_1^{b_1} \cdots g_n^{b_n} \conj{g_1}^{c_1} \cdots \conj{g_n}^{c_n}
\end{equation}
is a factorization of $f$ into nonassociate $\laurf$-irreducibles $f_1$, $\ldots$, $f_n$, $g_1$, $\ldots$, $g_n$, $\conj{g_1}$, $\ldots$, $\conj{g_n}$ and unit $u$ of $\laurf$ where $f_1$, $\ldots$, $f_m$ are generalized palindromes and $g_1$, $\ldots$, $g_n$ are not, and we have $a=(a_1,\ldots,a_m) \in \N^m$ and $b=(b_1,\ldots,b_n), c=(c_1,\ldots,c_n) \in \N^n$.
Then
\begin{align}
\feq{f} 
& = \bigcups{b',c' \in \N^n \\ b'+c'=b+c} \left[ f_1^{a_1}\cdots f_m^{a_m} g_1^{b_1'} \cdots g_n^{b_n'} \conj{g_1}^{c_1'} \cdots \conj{g_n}^{c_n'} \right]_F \label{Rosa} \\
& = \bigcups{b',c' \in \N^n \\ b'+c'=b+c \\ b' \leq c'} \left[\kern-3pt \left[ f_1^{a_1}\cdots f_m^{a_m} g_1^{b_1'} \cdots g_n^{b_n'} \conj{g_1}^{c_1'} \cdots \conj{g_n}^{c_n'}\right] \kern-3pt \right]_F, \label{Rose}
\end{align}
where the $b' \leq c'$ is using the lexicographic ordering of $\N^n$.
Let $N=\prod_{j=1}^n (b_j+c_j+1)$.
The  union in \eqref{Rosa} is of $N$ pairwise disjoint $\laurf$-associate classes and the union in \eqref{Rose} is of $\ceil{N/2}$ pairwise disjoint $F$-trivial equicorrelationality classes.
The count $N$ is odd if and only if $b_j+c_j$ is even for every $j \in \{1,2,\ldots,n\}$.
When $N$ is odd, precisely one of the $\laurf$-associate classes in \eqref{Rosa} is self-conjugate and precisely one of the $F$-trivial equicorrelationality classes in \eqref{Rose} is composed of a single $\laurf$-associate class (namely, the self-conjugate $\laurf$-associate class just mentioned); no such classes occur in \eqref{Rosa} or \eqref{Rose} when $N$ is even.
All the other $\laurf$-associate classes in \eqref{Rosa} are non-self-conjugate and occur in conjugate pairs, and all other $F$-trivial equicorrelationality classes in \eqref{Rose} contain two $\laurf$-associate classes (which are conjugate pairs) each.
The sequence $f$ is nontrivially equicorrelational to some other sequence in $\laurf$ if and only if $N \geq 3$.
\end{theorem}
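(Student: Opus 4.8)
The plan is to convert equicorrelationality into a statement about unique factorization in the UFD $\laurf$, read off the structure of $\feq{f}$, and then deduce the final biconditional from the resulting class count. First I would invoke \cref{Larry} and \cref{Ernie} to replace, for $h \in \laurf$, the condition ``$h$ is equicorrelational to $f$'' by ``$h\conj{h}$ and $f\conj{f}$ are $\laurf$-associates'': equicorrelationality makes the autocorrelation functions $\laurc$-associates (\cref{Larry}), and two nonzero elements of $\laurf$ that are $\laurc$-associates are automatically $\laurf$-associates (\cref{Ernie}). Using \eqref{Daisy} and the defining property $\conj{f_i} \in \ac{f_i}$ of a generalized palindrome, I would compute
\[
f\conj{f} \in \fac{f_1^{2a_1} \cdots f_m^{2a_m} (g_1\conj{g_1})^{b_1+c_1} \cdots (g_n\conj{g_n})^{b_n+c_n}},
\]
so that each palindromic prime $f_i$ enters the autocorrelation function squared, while the non-palindromic primes enter only through the symmetric products $g_j\conj{g_j}$.

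Next I would characterize the equicorrelational $h$. Factoring an arbitrary nonzero $h \in \laurf$ into $\laurf$-irreducibles and forming $h\conj{h}$, every palindromic irreducible of $h$ contributes a square and every non-palindromic prime $p$ contributes the pair $p\conj{p}$. Comparing with the display above and invoking unique factorization forces three things: no irreducible outside $\{f_1,\dots,f_m,g_1,\dots,g_n,\conj{g_1},\dots,\conj{g_n}\}$ can divide $h$; each $f_i$ must occur with multiplicity exactly $a_i$ (because it is squared in $h\conj{h}$); and $g_j, \conj{g_j}$ may occur with any multiplicities $b_j', c_j'$ subject only to $b_j'+c_j' = b_j+c_j$. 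Conversely every such product is visibly equicorrelational to $f$. This establishes \eqref{Rosa}, and since distinct vectors $(b',c')$ yield nonassociate products, the union consists of $N = \prod_{j=1}^n (b_j+c_j+1)$ pairwise disjoint $\laurf$-associate classes.

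To reach \eqref{Rose} and the class count I would use that conjugation carries the representative indexed by $(b',c')$ to the one indexed by $(c',b')$, again because $\conj{f_i} \in \ac{f_i}$; hence forming $F$-trivial equicorrelationality classes merges each pair $\{(b',c'),(c',b')\}$ into one class, with representatives selected by $b' \leq c'$ in the lexicographic order. The involution $(b',c') \mapsto (c',b')$ has a fixed point exactly when every $b_j+c_j$ is even, equivalently when $N$ is odd, and then exactly one; this fixed point is the unique self-conjugate $\laurf$-associate class and forms a trivial equicorrelationality class by itself, while every other class pairs two conjugate associate classes. Counting orbits then gives precisely $\ceil{N/2}$ trivial equicorrelationality classes.

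The final statement now follows at once: $f$ is nontrivially equicorrelational to some sequence in $\laurf$ exactly when $\feq{f} \setminus \fte{f} \neq \emptyset$, that is, when $\feq{f}$ splits into more than one trivial equicorrelationality class. By the count just established this occurs if and only if $\ceil{N/2} \geq 2$, which for a positive integer $N$ is equivalent to $N \geq 3$. I expect the main obstacle to reside not in this last deduction, which is mere arithmetic, but in the unique-factorization bookkeeping of the second paragraph---above all in ruling out extraneous irreducible factors and in pinning the palindromic primes to their exact multiplicities via the squaring---and in correctly matching the fixed-point count of the conjugation involution to the parity of $N$.
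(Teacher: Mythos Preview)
Your proposal is correct and follows essentially the same route as the paper: reduce equicorrelationality to the statement that $h\conj{h}$ and $f\conj{f}$ agree up to a unit, read off from unique factorization in $\laurf$ that the palindromic primes are pinned at multiplicity $a_i$ while the non-palindromic ones may be redistributed subject to $b'+c'=b+c$, then pair the resulting associate classes under the involution $(b',c')\mapsto(c',b')$ and count orbits. The only cosmetic difference is that you invoke \cref{Larry} and \cref{Ernie} explicitly for the reduction step, whereas the paper uses the definition directly together with the observation that the scalar automatically lands in $F$.
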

\begin{remark}
When $F=\C$ in \cref{Barbara}, we can factor $f$ completely into linear factors (times a unit), and then one recapitulates the results described in Theorem 2.4 of \cite{Beinert-Plonka}, which obtains results already shown in \cite{Fejer}.  If $f$ represents a sequence of length $\ell \geq 2$, then we can obtain $\ell-1$ linear factors in \eqref{Daisy} and so the number of nontrivial equicorrelationality classes in the equicorrelationality class of $f$ is at most $2^{\ell-2}$, as observed in \cite[Cor.\ 2.6]{Beinert-Plonka}.  We note that the maximum of $2^{\ell-2}$ is achieved if and only if either (i) $m=0$, $n=\ell-1$, and $\{b_j,c_j\}=\{0,1\}$ for every $j \in \{1,\ldots,\ell-1\}$ or (ii) $\ell=3$ with $m=0$, $n=1$, and $b_1+c_1=2$.  
\end{remark}
\begin{remark}
Beinert and Plonka \cite[Remark 2.7]{Beinert-Plonka} also consider what happens when $F=\R$, the real field, in the situation outlined in \cref{Barbara}, and (if we translate their result into the language of this paper) they point out that a real sequence written as a polynomial $f$ of length $\ell$ can have $2^{\ell-2}$ nontrivial equicorrelationality classes in its equicorrelationality class only if all its roots are real (i.e., if and only if $f$ splits in $\R[z]$).
\end{remark}
\cref{Barbara} limits the circumstances under which generalized palindromes may be equicorrelational to each other, as we shall show when we prove the following corollary.
\begin{corollary}\label{David}
If $f$ and $g$ are generalized palindromes that are equicorrelational, then they must be $\laurc$-associates.
\end{corollary}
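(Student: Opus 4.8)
The plan is to apply \cref{Barbara} with $F=\C$, which is certainly a self-conjugate subfield of itself. We may assume $f$ and $g$ are nonzero, since no nonzero sequence is equicorrelational to the zero sequence and $\ac{0}=\{0\}$. Factor $f$ completely over $\C$ as in \eqref{Daisy}, with generalized-palindrome irreducibles $f_1,\ldots,f_m$, non-palindrome irreducibles $g_1,\ldots,g_n$ together with their conjugates, and exponent vectors $a,b,c$. Because $F=\C$, the $\laurf$-associate classes appearing in \eqref{Rosa} are ordinary $\laurc$-associate classes.

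The first step is to observe that $\ac{f}$ is itself one of the $N$ associate classes listed in \eqref{Rosa}, namely the term indexed by $(b',c')=(b,c)$, and that this class is self-conjugate: since $f$ is a generalized palindrome we have $\conj{f}\in\ac{f}$, hence $\conj{\ac{f}}=\ac{\conj{f}}=\ac{f}$. Thus the union in \eqref{Rosa} contains at least one self-conjugate associate class. By the dichotomy in \cref{Barbara}, self-conjugate classes occur in \eqref{Rosa} precisely when $N$ is odd, and then there is exactly one of them; hence $N$ is odd and $\ac{f}$ is that unique self-conjugate associate class.

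It remains to place $g$ within this picture. Since $g\in\laurc$ is equicorrelational to $f$, we have $g\in\feq{f}$ with $F=\C$, so by \eqref{Rosa} the class $\ac{g}$ is one of the $N$ associate classes in the union. Since $g$ is a generalized palindrome, its associate class is self-conjugate, by the dichotomy for associate classes recorded in \cref{Herbert}. As $\ac{f}$ is the unique self-conjugate associate class occurring in \eqref{Rosa}, we conclude $\ac{g}=\ac{f}$, which is exactly the statement that $f$ and $g$ are $\laurc$-associates.

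I expect the argument to be routine once \cref{Barbara} is in hand. The only point demanding care is the bookkeeping that recognizes $f$'s own associate class as the distinguished $(b,c)$ term of \eqref{Rosa}: this lets the self-conjugacy of $\ac{f}$ force $N$ to be odd via \cref{Barbara}, rather than requiring a separate verification that $b_j=c_j$ for every $j$.
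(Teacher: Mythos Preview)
Your argument is correct and follows essentially the same route as the paper's own proof: apply \cref{Barbara} with $F=\C$, note that generalized palindromes have self-conjugate associate classes (via \cref{Herbert}), and conclude from the fact that \eqref{Rosa} contains at most one self-conjugate class that $\ac{f}=\ac{g}$. The only cosmetic difference is that you explicitly pin down $\ac{f}$ as the $(b',c')=(b,c)$ term and deduce $N$ is odd, whereas the paper simply invokes the ``at most one'' count directly; either way the logic is identical.
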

Furthermore, we show that certain kinds of generalized palindromes cannot be equicorrelational to each other.
A {\it palindrome} is a sequence $f \in \laurr$ such that $\conj{f}=z^j f$ for some $j \in \Z$.
An {\it antipalindrome} is a sequence $f \in \laurr$ such that $\conj{f}=-z^j f$ for some $j \in \Z$.
Palindromes and antipalindromes are the only kinds of generalized palindromes that occur among the binary sequences, and the only sequence that is both a palindrome and an antipalindrome is the zero sequence.
We shall prove the following as a consequence of \cref{Barbara}.
\begin{corollary}\label{Francis}
It is not possible for a palindrome in $\R[z,z^{-1}]$ to be equicorrelational to an antipalindrome in $\R[z,z^{-1}]$ unless both the sequences are $0$.
\end{corollary}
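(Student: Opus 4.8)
The plan is to deduce this from \cref{David}, which already tells us that two equicorrelational generalized palindromes must be $\laurc$-associates. First I would observe that both palindromes and antipalindromes are in fact generalized palindromes: if $\conj{f}=z^a f$ or $\conj{f}=-z^a f$ for some $a \in \Z$, then in either case $\conj{f}$ is a unit times $f$, so $\conj{f}\in\ac{f}$. I would also dispose of the degenerate case at once: since $\laurc$ is an integral domain, no nonzero sequence is equicorrelational to the zero sequence, so if either of $f$ and $g$ is $0$ then both are, and the statement holds vacuously. Hence it suffices to derive a contradiction from the assumption that $f$ is a nonzero palindrome, $g$ is a nonzero antipalindrome, and the two are equicorrelational.

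Under that assumption, \cref{David} gives $g=u f$ for some unit $u=c z^k$ with $c \in \Cu$ and $k \in \Z$. The key preliminary point is that $c$ is in fact real: letting $n$ be the largest index with $f_n\not=0$, the coefficient of $z^{n+k}$ in $g=c z^k f$ equals $c f_n$, and since $g$ has real coefficients and $f_n \in \R^\times$, we get $c=(cf_n)/f_n \in \R^\times$. With $c$ real, conjugation commutes harmlessly with this scalar, which is what makes the rest of the argument go through.

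Next I would write the two defining symmetries, say $\conj{f}=z^a f$ and $\conj{g}=-z^b g$ for suitable $a,b \in \Z$, and conjugate the associate relation $g=c z^k f$. Because $f$ has real coefficients and $c \in \R$, conjugating $g=cz^k f$ gives $\conj{g}=c z^{-k}\conj{f}=c z^{a-k} f$. On the other hand the antipalindrome condition gives $\conj{g}=-z^b g=-c z^{b+k} f$. Cancelling the nonzero factor $c f$ in the integral domain $\laurc$ yields the monomial identity $z^{a-k}=-z^{b+k}$.

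Finally, this identity is impossible over $\C$: equality of these two Laurent polynomials forces them to have the same support, hence $a-k=b+k$, and then equating coefficients at that common degree gives $1=-1$, which is false in characteristic zero. This contradiction shows that no nonzero palindrome can be equicorrelational to a nonzero antipalindrome, completing the proof. I expect the only real subtlety to be the bookkeeping that establishes $c \in \R$ together with the careful tracking of the exponent shifts; everything substantive is already packaged inside \cref{David}, so the body of the argument is a short computation rather than a new idea.
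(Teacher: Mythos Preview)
Your proof is correct and follows essentially the same route as the paper: both invoke \cref{David} to conclude that the two sequences are $\laurc$-associates, then exploit the palindrome and antipalindrome relations to force $1=-1$. The paper packages the computation you carry out inline (showing the unit scalar is real and comparing the two expressions for $\conj{g}$) into the intermediate \cref{Julia} via \cref{Laura}, but the underlying argument is the same.
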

For the rest of this introduction, we restrict the relation of equicorrelationality to binary sequences: an equivalence class of this relation is called a {\it binary equicorrelationality class}.
We also restrict the notion of trivial equicorrelationality to binary sequences: two binary sequences $f$ and $g$ are trivially equicorrelational if and only if $f=u z^j g$ or $f=u z^j \conj{g}$ for some $u \in \{-1,1\}$ and $j \in \Z$, and an equivalence class of this relation is called a {\it trivial binary equicorrelationality class}.
Trivial binary equicorrelationality refines binary equicorrelationality, so every binary equicorrelationality class is a union of pairwise disjoint trivial binary equicorrelationality classes.
The {\it volume} of a binary equicorrelationality class equals the number of trivial binary equicorrelationality classes in this union, and a binary equicorrelationality class with volume greater than one is called {\it nontrivial}.

We used a computer program to find all nontrivial binary equicorrelationality classes for binary sequences of lengths $1$ through $44$.
The searches for lengths $35$ and larger were made using opportunistic grid computing resources provided by the Open Science Grid Consortium \cite{osg07,osg09,osg06,osg15}.
The program was written primarily in the Rust language, with some use of the C language to enable the program to use the polynomial factorization routine in the PARI library \cite{PARI}, which in turn depends on the GNU Multiple Precision Arithmetic Library \cite{GMP}.
In \cref{George}, we indicate how many nontrivial binary equicorrelationality classes there are of each volume.
We represent the distribution of volumes of nontrivial equicorrelationality classes in a compact notation $n_1 [v_1] + n_2 [v_2]+\cdots+n_t [v_t]$, which means that there are $n_i$ classes of volume $v_i$ for each $i\in \{1,2,\ldots,t\}$.
If an entry for a particular sequence length is blank, it means that there are no nontrivial equicorrelationality classes for binary sequences of that length.
One can see that we did not encounter any nontrivial binary equicorrelationality class of odd volume.
It is also noteworthy that we did not discover any nontrivial binary equicorrelationality class that contains a palindrome or antipalindrome.

\begin{table}[ht!]
\caption{Nontrivial binary equicorrelationality classes}\label{George}
\begin{center}
\rowcolors{2}{gray!15}{white}
\begin{tabular}{c|c||c|c}
sequence & frequency [volume] & sequence & frequency [volume] \\
length & of nontrivial classes & length & of nontrivial classes \\
\hline
1 &         & 23 &                   \\ 
2 &         & 24 & 422 [2]           \\ 
3 &         & 25 & 36 [2]            \\ 
4 &         & 26 &                   \\ 
5 &         & 27 & 348 [2] + 1 [4]   \\ 
6 &         & 28 & 180 [2]           \\                   
7 &         & 29 &                   \\                
8 &         & 30 & 1214 [2]          \\                 
9 & 1 [2]   & 31 & 26 [2]            \\                  
10 &        & 32 & 1136 [2]          \\         
11 &        & 33 & 1105 [2]          \\                
12 & 8 [2]  & 34 & 30 [2]            \\                        
13 &        & 35 & 349 [2]           \\              
14 &        & 36 & 8230 [2] + 16 [4] \\
15 & 14 [2] & 37 &                   \\          
16 & 12 [2] & 38 &                   \\
17 & 1 [2]  & 39 & 4102 [2]          \\
18 & 42 [2] & 40 & 6288 [2]          \\
19 &        & 41 & 4[2]              \\                 
20 & 44 [2] & 42 & 17574 [2]         \\
21 & 67 [2] & 43 & 22 [2]            \\
22 &        & 44 & 3104 [2]          \\    
\end{tabular}
\end{center}
\end{table}

We define a binary sequence $f$ to be {\it equivocal} if it is nontrivially equicorrelational to some other binary sequence; otherwise $f$ is {\it unequivocal}.
A positive integer $n$ is said to be {\it equivocal} if there is an equivocal binary sequence of length $n$; otherwise $n$ is {\it unequivocal}.
\cref{George} shows that the numbers from $1$ to $8$, along with $10$, $11$, $13$, $14$, $19$, $22$, $23$, $26$, $29$, $37$, and $38$ are unequivocal.
We shall prove the following result, which explains why many numbers are equivocal.
\begin{proposition}\label{Karen}
Let $m$, $n$ be positive integers such that $m | n$. If $m$ is equivocal, then $n$ is equivocal.
\end{proposition}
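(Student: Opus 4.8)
The plan is to use the inflation map $z \mapsto z^k$ (where $n = mk$) combined with multiplication by a fixed binary palindrome of length $k$, turning an equivocal pair of length $m$ into an equivocal pair of length $n$. Since $m$ is equivocal there are binary sequences $f, g \in \laurc$ of length $m$ that are nontrivially equicorrelational; because $m \mid n$ we may write $n = mk$ with $k$ a positive integer, set $p(z) = 1 + z + \cdots + z^{k-1}$, and define $F = f(z^k) p(z)$ and $G = g(z^k) p(z)$. First I would check that $F$ and $G$ are binary of length $n$: after translating $f$ so that $\supp f = \{0,\ldots,m-1\}$, the coefficient of $z^{jk+i}$ in $F$ (for $0 \le j \le m-1$, $0 \le i \le k-1$) is $f_j p_i = f_j \in \{1,-1\}$, and the exponents $jk+i$ fill exactly the segment $\{0,1,\ldots,mk-1\}$, so $F$ is binary of length $mk = n$; likewise for $G$.

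Second, I would verify that $F$ and $G$ are equicorrelational. The key is that $z \mapsto z^k$ is a ring endomorphism of $\laurc$ under which conjugation is preserved, $\conj{h(z^k)} = \conj{h}(z^k)$, so the autocorrelation function transforms as $(h \conj{h})(z^k) = h(z^k)\conj{h(z^k)}$. Since $f$ and $g$ are binary of the same length, equicorrelationality forces $f \conj{f} = g \conj{g}$ (the constant terms are $C_f(0) = m = C_g(0)$, so the positive real scalar is $1$). Substituting $z \mapsto z^k$ gives $f(z^k)\conj{f(z^k)} = g(z^k)\conj{g(z^k)}$, and multiplying both sides by $p\conj{p}$ yields $F\conj{F} = f(z^k)\conj{f(z^k)}\,p\conj{p} = g(z^k)\conj{g(z^k)}\,p\conj{p} = G\conj{G}$, so $F$ and $G$ are equicorrelational.

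Third, and this is the step I expect to be the crux, I would show that $F$ and $G$ are not trivially equicorrelational by descending back to $f$ and $g$. Suppose first $F = u z^s G$ with $u \in \{1,-1\}$ and $s \in \Z$. Cancelling the nonzero factor $p$ in the integral domain $\laurc$ gives $f(z^k) = u z^s g(z^k)$; comparing supports (the left side is supported in $k\Z$, the right in $s + k\Z$) forces $k \mid s$, say $s = k s'$, whence $f(z^k) = u (z^k)^{s'} g(z^k)$ and therefore $f = u z^{s'} g$, contradicting the nontriviality of the pair $(f,g)$. Suppose instead $F = u z^s \conj{G}$. Here I use that $p$ is a palindrome, $\conj{p} = z^{-(k-1)} p$, so writing $\conj{G} = \conj{g}(z^k)\,z^{-(k-1)} p$ and cancelling $p$ gives $f(z^k) = u z^{s-(k-1)} \conj{g}(z^k)$; the same support comparison shows $k \mid s-(k-1)$, and descending through $z \mapsto z^k$ yields $f = u z^{s''} \conj{g}$, again contradicting nontriviality. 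Hence $F$ and $G$ are nontrivially equicorrelational binary sequences of length $n$, so $n$ is equivocal. The only delicate points are the bookkeeping that forces the shift to be a multiple of $k$ and the choice of $p$ as a palindrome, which is precisely what lets the conjugate case cancel as cleanly as the direct case.
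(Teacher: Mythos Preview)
Your argument is correct. It is, at the structural level, the same ``inflate one factor and multiply by another'' idea the paper uses, but the execution is genuinely different in two respects worth noting.

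First, the roles are swapped: the paper (in its proof and the remark following it) inflates the \emph{filler} sequence and leaves the nontrivially equicorrelational pair uninflated, producing $b|b|\cdots|b$ versus $d|d|\cdots|d$; you inflate the nontrivially equicorrelational pair $f,g$ and multiply by the filler $p$, producing sequences where each term of $f$ (resp.\ $g$) is repeated $k$ times. Both are special cases of the paper's general \cref{Milton}, just with the two slots interchanged.

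Second, and more substantively, the paper reaches the nontriviality conclusion by first proving the general machinery of \cref{Deidre}, \cref{Hector}, and \cref{Milton} (a gcd-based analysis of exactly when $a(z^m)b(z)$ and $c(z^m)d(z)$ are trivially equicorrelational for arbitrary $a,b,c,d$), and then invoking it. You bypass all of that with a short, direct support argument: cancel $p$, observe that both sides of $f(z^k)=u z^s g(z^k)$ are supported in $k\Z$, deduce $k\mid s$, and descend through the injective endomorphism $z\mapsto z^k$. Your deliberate choice of $p$ as a palindrome is what makes the conjugate case collapse to the same computation, whereas the paper's general lemmas have to treat that case separately via $\gcd(a,\conj{c})$ and $\gcd(b,\conj{d})$. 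Your route is shorter and more elementary for the stated goal; the paper's route yields a reusable classification (\cref{Milton}) of exactly when the product construction gives trivial versus nontrivial equicorrelationality, which is of independent interest.
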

Perusal of \cref{George} shows that unequivocal numbers seem to become more sparse as the length increases.
This leads to the following open question.
\begin{openproblem}
Are there finitely or infinitely many unequivocal numbers?
\end{openproblem}
Further perusal of \cref{George} shows that the number of nontrivial equicorrelationality classes sometimes increases as sequence length increases, but when one considers that the total number of binary sequences doubles every time the length increases by $1$, the fraction of equivocal sequences does not appear to be on a trend of growth.
This suggests another open question.
\begin{openproblem}
Does the fraction of equivocal binary sequences vanish asymptotically?  That is, if we define $N_\ell$ to be the number of equivocal binary sequences of length $\ell$, does $N_\ell/2^\ell$ tend to $0$ as $\ell$ tends to infinity?  
\end{openproblem}
The rest of this paper is organized as follows.
\cref{Helen} contains preliminaries of notations and basic results.
In \cref{Jason}, we prove \cref{Barbara}.
In \cref{Alexandra}, we prove its Corollaries \ref{David} and \ref{Francis}.
In \cref{Hilda}, we prove \cref{Karen}.

\section{Preliminaries}\label{Helen}

Throughout this paper, $\N=\{0,1,2,\ldots\}$ and if $R$ is a ring, then $\Ru$ denotes the unit group of $R$.
We always use the Laurent polynomial formalism for sequences and their autocorrelation functions, as described in the Introduction, so sequences are always thought of as elements of the Laurent polynomial ring $\laurc$.
We retain the convention that whenever we simply write a letter like ``$g$'' for a Laurent polynomial, it should be interpreted as shorthand for ``$g(z)$'', but we sometimes write the full ``$g(z)$'' notation, especially when distinguishing $g(z)$ from other polynomials derived from $g(z)$ such as $g(-z)$ or $g(z^2)$.
For any field $F$, we have $\laurfu=\{c z^j: c \in \Fu, j \in \Z\}$, so every nonzero $f \in \laurf$ can be written uniquely as $f=u g$ where $u \in \laurfu$ and $g$ is a monic polynomial in $F[z]$ with a nonzero constant coefficient, and then the length (originally described in the third paragraph of the Introduction) of $f$ is $1+\deg g$.
(And, of course, $\len(0)=0$.)
In particular, an element of $\laurf$ is a unit if and only if it has length $1$.
Then one can verify that $\laurf$ is a Euclidean domain with $\len$ as its Euclidean size function.
In particular, $\len(f g) \geq \len f$ for all $f, g \in \laurf$ with $g\not=0$, and thus $\laurf$-associates have the same length.
Since $\laurf$ is a Euclidean domain, every pair of elements $f,g \in \laurf$ has a greatest common divisor (defined up to $\laurf$-associates), and if $f$ and $g$ are not both $0$, then we shall use $\gcd(f,g)$ to denote the unique monic polynomial with nonzero constant coefficient in the $\laurf$-associate class containing the greatest common divisors (and we set $\gcd(0,0)=0$).

A few other more precise results about units, the length function, and greatest common divisors will be useful later in the paper.

\begin{lemma}\label{Leonard}
If $F$ is a field and if $f,g$ are nonzero elements of $\laurf$, then $\len(f g)=\len f+\len g-1$.
\end{lemma}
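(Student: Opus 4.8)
The plan is to reduce the Laurent-polynomial statement to an ordinary-polynomial statement, where the length of a nonzero Laurent polynomial is one more than the degree of its associated monic polynomial. Concretely, I would first invoke the normal form established in the preliminaries: every nonzero $h \in \laurf$ can be written uniquely as $h = u \tilde h$ with $u \in \laurfu$ and $\tilde h$ a monic polynomial in $F[z]$ having nonzero constant coefficient, and then $\len h = 1 + \deg \tilde h$. Writing $f = u \tilde f$ and $g = v \tilde g$ in this form, I would observe that $f g = (uv)(\tilde f \tilde g)$, where $uv \in \laurfu$ and $\tilde f \tilde g \in F[z]$ is monic with nonzero constant coefficient (it is monic because the product of monic polynomials is monic, and it has nonzero constant coefficient because the constant coefficients of $\tilde f$ and $\tilde g$ are nonzero and $F$ is a field, hence an integral domain). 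Thus $\tilde f \tilde g$ is exactly the monic representative of $fg$, so $\len(fg) = 1 + \deg(\tilde f \tilde g)$.

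The key computational step is then the standard fact that degree is additive over multiplication in $F[z]$ when $F$ is a field (equivalently, an integral domain): $\deg(\tilde f \tilde g) = \deg \tilde f + \deg \tilde g$, since the leading coefficients of $\tilde f$ and $\tilde g$ are both $1$ and their product is $1 \neq 0$, so the top-degree term does not cancel. Combining the displayed equalities,
\[
\len(fg) = 1 + \deg(\tilde f \tilde g) = 1 + \deg \tilde f + \deg \tilde g = (1 + \deg \tilde f) + (1 + \deg \tilde g) - 1 = \len f + \len g - 1,
\]
which is the claimed identity.

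I do not anticipate a serious obstacle here: the only subtlety worth stating explicitly is why $\tilde f \tilde g$ retains a nonzero constant coefficient and unit leading coefficient, which is exactly where the integral-domain property of the field $F$ is used, and which also guarantees $fg \neq 0$ so that the length formula $\len h = 1 + \deg \tilde h$ genuinely applies to $fg$. Everything else is the uniqueness of the normal form and additivity of degree, both already available from the discussion preceding the lemma.
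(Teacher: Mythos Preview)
Your proof is correct and follows essentially the same approach as the paper: both write $f$ and $g$ in the normal form (unit times monic polynomial with nonzero constant coefficient), observe that the product inherits this form, and then use additivity of degree in $F[z]$ to conclude. Your version is slightly more explicit about why the product $\tilde f\tilde g$ remains monic with nonzero constant coefficient, but the argument is otherwise identical to the paper's.
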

\begin{proof}
Write $f=u a$ and $g=v b$ with $u,v \in \laurfu$ and $a,b$ monic polynomials with nonzero constant coefficients.  Then $f g=(u v)(a b)$ and $u v$ is a unit while $a b$ is a monic polynomial with a nonzero constant coefficient.  Thus, $\len(f g)=1+\deg(a b)=(1+\deg a)+(1+\deg b)-1=\len f+\len g-1$.
\end{proof}

\begin{lemma}\label{Dorothy}
Let $F$ be a field, $u(z) \in \laurfu$, and $m \in \Z$.  Then $u(z^m) \in \laurfu$.
\end{lemma}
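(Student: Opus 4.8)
The plan is to reduce the lemma to a single substitution by invoking the explicit description of the unit group established earlier in this section, namely $\laurfu=\{c z^j: c \in \Fu, j \in \Z\}$. Since every unit of $\laurf$ is a monomial with nonzero coefficient, it suffices to track what substitution does to such a monomial.

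First I would observe that the operation $h(z) \mapsto h(z^m)$ is well-defined on all of $\laurf$ for every $m \in \Z$: because $z^m$ is itself the unit $1 \cdot z^m$, the assignment $z \mapsto z^m$ extends to a ring endomorphism of $\laurf$, so $u(z^m)$ is a legitimate element of $\laurf$ even when $m$ is zero or negative. Then, since $u(z) \in \laurfu$, I would write $u(z)=c z^j$ for some $c \in \Fu$ and $j \in \Z$. Substituting $z \mapsto z^m$ gives $u(z^m)=c(z^m)^j=c z^{mj}$, which is again a monomial with nonzero coefficient $c \in \Fu$ and integer exponent $mj$, hence lies in $\laurfu$ by the same description of the unit group.

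I do not expect any genuine obstacle here; the computation is immediate once the monomial form of units is in hand. The only point deserving a moment's care is that the statement is claimed for all $m \in \Z$, and the substitution argument dispatches every case uniformly: when $m=0$ we get $u(z^0)=c$, a nonzero constant and thus still a unit, and negative $m$ poses no difficulty since $z^m \in \laurf$ for every integer $m$. Consequently the single identity $u(z^m)=c z^{mj}$ settles the lemma in all cases.
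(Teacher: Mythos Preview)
Your proof is correct. The paper's argument is slightly different in flavor: rather than invoking the explicit monomial description of units, it simply takes $v(z)$ with $u(z)v(z)=1$ and substitutes $z\mapsto z^m$ to obtain $u(z^m)v(z^m)=1$, exhibiting an inverse directly. Your approach is equally short and perhaps more concrete; the paper's approach has the minor advantage that it works verbatim for any ring endomorphism (no need to know what the units look like), whereas yours relies on the monomial form already recorded earlier in the section. Either way the lemma is immediate.
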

\begin{proof}
Since $u(z)$ is a unit in $\laurf$, there is some $v(z) \in \laurf$ such that $u(z)v(z)=1$.  Substituting $z^m$ for $z$ in this expression shows that $u(z^m)$ is a unit in $\laurf$.
\end{proof}

\begin{lemma}\label{Samuel}
Let $F$ be a field, let $f(z)$ be a nonzero element of $\laurf$, and let $m$ be a nonzero integer.
Then $\len f(z^m)=(-1+\len f(z))|m|+1$.
\end{lemma}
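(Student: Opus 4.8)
The plan is to reduce to the case of an ordinary polynomial with nonzero constant coefficient, and then to track how the exponents appearing in the support behave under the substitution $z \mapsto z^m$. The only genuine subtlety, and the step I expect to require the most care, is the sign of $m$: when $m$ is negative the substitution turns a polynomial into a Laurent polynomial and reverses the order of the exponents, so the roles of the top and bottom of the support are interchanged. Everything else is bookkeeping.

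First I would invoke the canonical factorization recorded in the preliminaries: write $f(z) = u(z) a(z)$ where $u(z) \in \laurfu$ and $a(z)$ is a monic polynomial in $F[z]$ with nonzero constant coefficient, so that $\len f(z) = 1 + \deg a$. Substituting $z^m$ for $z$ gives $f(z^m) = u(z^m) a(z^m)$, and by \cref{Dorothy} the factor $u(z^m)$ is a unit of $\laurf$. Since $\laurf$-associates have the same length, it suffices to compute $\len a(z^m)$, and this reduction is exactly what \cref{Dorothy} was set up to provide.

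Next, writing $a(z) = \sum_{k=0}^d a_k z^k$ with $d = \deg a = \len f(z) - 1$ and $a_0, a_d \neq 0$, I would observe that $a(z^m) = \sum_{k=0}^d a_k z^{km}$, so $\supp a(z^m) = \{km : k \in \supp a\}$. Because the map $k \mapsto km$ simply scales every exponent by $m$, the extreme exponents of $a(z^m)$ come from the extreme exponents of $a$, namely $0$ and $d$. Recalling that $\len h = \max \supp h - \min \supp h + 1$ for any nonzero $h$ (this being the cardinality of the smallest segment containing the support), the length of $a(z^m)$ is governed entirely by the spread between its top and bottom exponents.

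Finally I would split on the sign of $m$. For $m > 0$ the substitution preserves the ordering of exponents, so $\max \supp a(z^m) = md$ and $\min \supp a(z^m) = 0$, whence $\len a(z^m) = md + 1$. For $m < 0$ the substitution reverses the ordering, so the top exponent $0$ now comes from $k = 0$ and the bottom exponent $md$ from $k = d$, giving $\len a(z^m) = 0 - md + 1 = |m|d + 1$. In either case $\len a(z^m) = |m|\,d + 1 = (-1 + \len f(z))|m| + 1$, which, combined with the reduction of the previous step, yields the claimed formula.
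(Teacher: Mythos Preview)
Your proof is correct and follows essentially the same approach as the paper: factor $f$ as a unit times a monic polynomial $a$ with nonzero constant term, invoke \cref{Dorothy} to reduce to computing $\len a(z^m)$, and then split on the sign of $m$. The only cosmetic difference is that the paper handles the negative-$m$ case by explicitly writing $a(z^m)=a_0 z^{m\deg a}b(z)$ with $b$ monic and reading off $\deg b$, whereas you compute the length directly from $\max\supp a(z^m)-\min\supp a(z^m)+1$; both are equally valid.
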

\begin{proof}
Write $f(z)=u(z) a(z)$ where $u(z) \in \laurfu$ and $a(z)$ is a monic polynomial in $F[z]$ with a nonzero constant coefficient with $\deg a(z)=-1+\len f(z)$.
Then $f(z^m)=u(z^m) a(z^m)$ and $u(z^m)$ is a unit in $\laurf$ by \cref{Dorothy}.
If $m$ is positive, then $a(z^m)$ is a monic polynomial in $F[z]$ with nonzero constant coefficient and degree $m \deg a(z)$, so $\len f(z^m)=1+\deg a(z^m)=1+m\deg a(z) =1+|m|(-1+\len f(z))$.
If $m$ is negative, let $a_0$ be the constant coefficient of $a(z)$, and then $a(z^m)=a_0 z^{m \deg a(z)} b(z)$ where $b(z)$ is a monic polynomial in $F[z]$ with nonzero constant coefficient and degree $-m \deg a(z)$; then $f(z^m)$ is an associate of $b(z)$, so $\len f(z)=1+\deg b(z)=1+|m|(-1+\len f(z))$.
\end{proof}

\begin{lemma}\label{Conrad}
Let $F$ be a field, let $f(z)$ and $g(z)$ be relatively prime elements of $\laurf$, and let $m \in \Z$.
Then $f(z^m)$ and $g(z^m)$ are also relatively prime elements of $\laurf$.
\end{lemma}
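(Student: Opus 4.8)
The plan is to reduce everything to a single instance of Bézout's identity, which is available because $\laurf$ is a Euclidean domain and hence a principal ideal domain. Since $f(z)$ and $g(z)$ are relatively prime, their greatest common divisor is a unit, so there exist $p(z), q(z) \in \laurf$ with
\[
p(z) f(z) + q(z) g(z) = 1 .
\]
The entire argument then rests on transporting this one equation under the substitution $z \mapsto z^m$.

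First I would verify that substituting $z^m$ for $z$ defines a ring endomorphism of $\laurf$ for every $m \in \Z$, not merely for $m \geq 0$. For $m \geq 0$ this is the usual evaluation homomorphism $F[z] \to \laurf$ followed by inclusion; but since $z$ is invertible in $\laurf$ and $z^m$ is likewise a unit (as recorded in \cref{Dorothy}), the universal property of the localization of $F[z]$ obtained by inverting $z$ guarantees a unique $F$-algebra endomorphism sending $z \mapsto z^m$ for all $m \in \Z$, including negative $m$, where $z^m = z^{-|m|}$. Applying this endomorphism to the Bézout relation yields
\[
p(z^m) f(z^m) + q(z^m) g(z^m) = 1 .
\]

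With this identity in hand the conclusion is immediate: any common divisor $d$ of $f(z^m)$ and $g(z^m)$ in $\laurf$ divides the left-hand side and hence divides $1$, so $d$ is a unit. Therefore $\gcd(f(z^m), g(z^m)) = 1$ and the two elements are relatively prime. Because the substitution map and the resulting identity make sense for every $m \in \Z$, there are no cases to separate; in particular the boundary value $m = 0$ is handled automatically, the identity degenerating to $p(1) f(1) + q(1) g(1) = 1$, which incidentally shows that $f(1)$ and $g(1)$ cannot vanish simultaneously.

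The only point requiring care --- and the closest thing to an obstacle --- is the well-definedness of the substitution on the Laurent ring rather than on $F[z]$, so that the argument genuinely covers negative $m$ and the unit $z^m$ rather than a mere polynomial. Once that is granted, no computation is needed, and nothing is used about $f$ and $g$ beyond their relative primeness.
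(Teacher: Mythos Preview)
Your proof is correct and follows essentially the same approach as the paper: invoke B\'ezout's identity in the Euclidean domain $\laurf$ to write $p(z)f(z)+q(z)g(z)=1$, then substitute $z\mapsto z^m$ to obtain the same relation for $f(z^m)$ and $g(z^m)$. You supply more justification than the paper does for why the substitution is a well-defined ring endomorphism (especially for negative $m$), but the underlying argument is identical.
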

\begin{proof}
Since $\laurf$ is a Euclidean domain, it is a principal ideal domain, so there are $a(z),b(z) \in \laurf$ such that $a(z) f(z) + b(z) g(z)=1$.  Then $a(z^m) f(z^m) + b(z^m) g(z^m)=1$ also, showing that $f(z^m)$ and $g(z^m)$ are relatively prime.
\end{proof}

In the Introduction, it was claimed that if $F$ is a subfield of $\C$ and $f \in \laurf$, then $\fac{f}=\ac{f}\cap\laurf$, that is, the set of $\laurf$-associates of $f$ is obtained from the set $\ac{f}$ of $\laurc$-associates of $f$ by just taking those elements in the latter set whose coefficients all lie in $F$.  We prove a slightly more general result here (where $\C$ is replaced with an arbitrary extension field $E$ of $F$).
\begin{lemma}\label{Ernie}
Let $E$ be a field and let $F$ be a subfield of $E$.  Let $f,g \in \laurf$.  Then $f$ and $g$ are $\laure$-associates if and only if they are $\laurf$-associates.  Thus, if $F$ is a subfield of $\C$, then $[f]_F=[f]\cap\laurf$.
\end{lemma}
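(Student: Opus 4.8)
The plan is to prove the biconditional directly and then deduce the concluding identity by specializing $E=\C$. The forward implication is immediate: if $f$ and $g$ are $\laurf$-associates, then $f=ug$ for some $u \in \laurfu$, and since $F \subseteq E$ we have $\laurfu \subseteq \laure^\times$, so the very same $u$ witnesses that $f$ and $g$ are $\laure$-associates. No work is needed here beyond observing that the smaller unit group sits inside the larger one.

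The substance is the reverse implication, namely descending the unit from $E$ back down to $F$. Suppose $f$ and $g$ are $\laure$-associates, say $f=ug$ with $u \in \laure^\times$. If $g=0$ then $f=0$ as well and the two are trivially $\laurf$-associates, so I may assume $g \neq 0$ (whence $f \neq 0$). By the description of units recalled in the preliminaries, every unit of $\laure$ is a monomial, so $u=c z^j$ for some $c \in E^\times$ and $j \in \Z$. Writing $g=\sum_k g_k z^k$ with all $g_k \in F$ and only finitely many nonzero, we get $f=\sum_k (c g_k) z^{k+j}$. Since $f \in \laurf$, each coefficient $c g_k$ lies in $F$. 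The crux is then to extract $c$ itself: choosing any index $k$ with $g_k \neq 0$, which exists because $g \neq 0$, we have $g_k \in \Fu$, so $c=(c g_k)g_k^{-1} \in F$, and $c \neq 0$ because $u$ is a unit. Hence $u=c z^j \in \laurfu$, showing that $f$ and $g$ are $\laurf$-associates. This coefficient-comparison step is the only real content of the lemma; I expect everything else to be formal, and the sole thing to be careful about is having a genuinely nonzero coefficient of $g$ available to invert.

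Finally, to obtain $\fac{f}=\ac{f}\cap\laurf$ for a subfield $F$ of $\C$, I would specialize the equivalence just proved to $E=\C$. The inclusion $\fac{f}\subseteq \ac{f}\cap\laurf$ holds because elements of $\fac{f}$ lie in $\laurf$ by definition and are $\laurc$-associates of $f$ (using $\laurfu \subseteq \laurcu$). Conversely, if $g \in \ac{f}\cap\laurf$, then $g \in \laurf$ and $g$ is a $\laurc$-associate of $f$; the equivalence with $E=\C$ upgrades this to $g$ being an $\laurf$-associate of $f$, i.e.\ $g \in \fac{f}$. The two inclusions together give the stated equality.
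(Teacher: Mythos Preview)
Your proof is correct and follows essentially the same approach as the paper's own proof. The paper also handles the zero case separately, writes the unit as a monomial $e z^j$, and extracts $e\in F^\times$ by dividing a coefficient of one sequence by the corresponding nonzero coefficient of the other (the paper happens to choose the lowest-degree monomial rather than an arbitrary nonzero one); the deduction of $\fac{f}=\ac{f}\cap\laurf$ by specializing to $E=\C$ is likewise the same.
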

\begin{proof}
Suppose that $f$ and $g$ are $\laurf$-associates.  Since every unit of $\laurf$ is a unit of $\laure$, we see that $f$ and $g$ are $\laure$-associates.

Now suppose that $f$ and $g$ are $\laure$-associates.  If one of $f$ or $g$ is zero, then the other must also be zero, and then they are clearly $\laurf$-associates.  So we may assume that $f$ and $g$ are nonzero from now on, and then there is some unit $u$ of $\laure$ such that $f=u g$.  Now $u=e z^j$ for some nonzero $e \in E$.  If $f_k z^k$ is the lowest degree monomial in $f$ (so $f_k \in F^\times$), then $e f_k z^{k+j}$ is the lowest degree monomial in $g$ (so $e f_k \in F^\times$).  Thus $e=(e f_k)/f_k \in F^\times$.  This makes $u$ a unit in $\laurf$, and so $f$ and $g$ are $\laurf$-associates.
This proves the first claim in this lemma.

The first claim of this lemma (applied with $E=\C$) shows that $h \in \laurc$ is an $\laurf$-associate of $f$ if and only if $h$ is both a $\laurc$-associate of $f$ and $h \in \laurf$, which means that $[f]_F=[f]\cap\laurf$.
\end{proof}

For any subfield $F$ of $\C$, the conjugation map $f \mapsto \conj{f}$ from $\laurf$ to $\laurfc$ is an isomorphism of rings and so maps $0$, units, irreducibles, and reducible elements of $\laurf$ respectively to $0$, units, irreducibles, and reducible elements of $\laurfc$.  In particular $\conj{\laurfu}=\laurfcu$.
This conjugation map also carries pairs of elements that are $\laurf$-associates to pairs of elements that are $\laurfc$-associates, and so $\conj{\fac{f}}=\ac{\conj{f}}_{\conj{F}}$.
First we show that conjugation preserves length.
\begin{lemma}\label{Lester}
If $f\in\laurc$, then $\len \conj{f}=\len f$.
\end{lemma}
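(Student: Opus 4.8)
The plan is to reduce everything to the behaviour of the support under conjugation. First I would dispose of the trivial case $f=0$, where $\conj{f}=0$ and both lengths are $0$. For $f\neq 0$, the first step is to read off $\supp\conj{f}$ directly from the definition of conjugation: since $\conj{f(z)}=\sum_{j\in\Z}\conj{f_j}z^{-j}$, the coefficient of $z^{-j}$ in $\conj{f}$ is $\conj{f_j}$, which is nonzero exactly when $f_j\neq 0$. Hence $\supp\conj{f}=\{-j:j\in\supp f\}$, i.e.\ the support of $\conj{f}$ is the reflection of $\supp f$ through the origin.

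The second step is to understand how the minimal enclosing segment interacts with this reflection. Let $\sigma\colon\Z\to\Z$ be the involution $\sigma(j)=-j$. It carries each segment $[p,q]\cap\Z$ to the segment $[-q,-p]\cap\Z$, so it is a bijection taking segments to segments of equal cardinality. The key observation is that a segment $I$ contains $\supp\conj{f}=\sigma(\supp f)$ if and only if $\sigma(I)$ contains $\supp f$; since $\sigma$ is its own inverse, the segments containing $\supp\conj{f}$ are precisely the images under $\sigma$ of the segments containing $\supp f$.

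It follows that the smallest segment containing $\supp\conj{f}$ is $\sigma$ applied to the smallest segment containing $\supp f$, and these two segments have the same cardinality. By the definition of length as the cardinality of the smallest segment containing the support, this gives $\len\conj{f}=\len f$.

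I expect no serious obstacle here; the only point requiring care is the middle step, namely verifying that the reflection $\sigma$ gives an inclusion- and cardinality-preserving correspondence between the segments enclosing $\supp f$ and those enclosing $\supp\conj{f}$, so that the minimal enclosing segments correspond. An alternative route would factor $f=ua$ with $u$ a unit and $a$ monic with nonzero constant term, note that $\conj{u}$ is again a unit so that $\len\conj{f}=\len\conj{a}$, and then compute $\len\conj{a}=\len a=\len f$ by inspecting the extreme exponents of $\conj{a}$; but the support argument is cleaner and avoids any appeal to factorization.
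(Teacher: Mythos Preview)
Your proof is correct, and it proceeds along a genuinely different route from the paper's. You argue directly from the definition of length via the support: conjugation reflects $\supp f$ through the origin, reflections carry segments to segments of the same cardinality, and therefore the minimal enclosing segments have equal cardinality. The paper instead uses exactly the factorization argument you sketch at the end as an ``alternative route'': it writes $f=u g$ with $u$ a unit and $g$ monic with nonzero constant term, then explicitly exhibits $\conj{f}$ as a unit times a monic polynomial of the same degree with nonzero constant term, namely $\conj{g_0}^{-1} z^{\deg g}\,\conj{g}$. Your support argument is more elementary in that it never invokes the normal form $f=u g$ and works entirely at the level of sets of indices; the paper's argument has the mild advantage of producing the explicit monic representative of $\conj{f}$, which is not needed elsewhere. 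Either is perfectly adequate for this simple lemma.
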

\begin{proof}
If $f=0$, then $\conj{f}=0$, so $\len \conj{f}=\len f$, so from now on assume $f\not=0$.
Write $f= u g$ where $u \in \laurcu$ and $g$ is a monic polynomial with a nonvanishing constant coefficient $g_0$, so that $\len f=1+\deg g$.
Then $\conj{f} = \conj{u} \conj{g} = \conj{u} \conj{g_0} z^{-\deg g} (\conj{g_0}^{-1} z^{\deg g} \conj{g})$, but $\conj{u} \conj{g_0} z^{-\deg g} \in \laurcu$ and $\conj{g_0}^{-1} z^{\deg g} \conj{g}$ is a monic polynomial of degree $\deg g$ with nonvanishing constant coefficient $\conj{g_0}^{-1}$.  So $\len \conj{f}=1+\deg g=\len f$.
\end{proof}
The next technical result shows how associate generalized palindromes are related to each other.
\begin{lemma}\label{Laura}
Let $f$ and $g$ be generalized palindromes with $\conj{f}=u z^j f$ and $\conj{g}=v z^k g$ for some $u, v \in \Cu$ and $j,k \in \Z$.
Let $F$ be a subfield of $\C$ and suppose that $f,g \in \laurf$ and that $f$ and $g$ are $\laurc$-associates.
Then either (i) $f=g=0$ or else (ii) $v/u = \conj{w}/w$ for some $w \in \Fu$ and $j \equiv k \pmod{2}$.
\end{lemma}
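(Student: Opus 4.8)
The plan is to dispose of the degenerate case immediately and then reduce everything to matching the coefficients and exponents in a single monomial identity. If $f=0$, then since $g$ is a $\laurc$-associate of $0$ it must itself be $0$, which is case (i); so from now on assume $f\neq 0$, and hence $g\neq 0$ as well, since associates of a nonzero element are nonzero.

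The crucial first move is to upgrade the hypothesis that $f$ and $g$ are $\laurc$-associates to the statement that they are $\laurf$-associates. This is exactly \cref{Ernie}, and it is what will let us place the witness $w$ in $\Fu$ rather than merely in $\Cu$. Concretely, \cref{Ernie} together with the description $\laurfu=\{c z^\ell : c \in \Fu,\, \ell \in \Z\}$ furnishes some $c \in \Fu$ and $\ell \in \Z$ with $g = c z^\ell f$.

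Next I would compute $\conj{g}$ in two ways. On one hand, applying conjugation to $g = c z^\ell f$ and using $\conj{f} = u z^j f$ gives $\conj{g} = \conj{c}\, z^{-\ell}\, \conj{f} = \conj{c}\, u\, z^{j-\ell}\, f$. On the other hand, the generalized-palindrome hypothesis on $g$ together with $g = c z^\ell f$ gives $\conj{g} = v z^k g = v c\, z^{k+\ell}\, f$. Equating these two expressions and cancelling the nonzero factor $f$ in the integral domain $\laurc$ yields the single identity $\conj{c}\, u\, z^{j-\ell} = v\, c\, z^{k+\ell}$, an equality of monomials.

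Finally I would read off the conclusion by comparing this identity term by term. Matching exponents gives $j-\ell = k+\ell$, i.e.\ $j-k = 2\ell$, so $j \equiv k \pmod{2}$; matching coefficients gives $\conj{c}\, u = v\, c$, i.e.\ $v/u = \conj{c}/c$. Setting $w = c \in \Fu$ then produces the required witness, since $\conj{w}/w = \conj{c}/c = v/u$, establishing case (ii). There is no serious obstacle here beyond keeping the bookkeeping straight; the one point that genuinely matters is the appeal to \cref{Ernie} at the very start, since without it we could only conclude $w \in \Cu$, and the sharper claim $w \in \Fu$ would be unavailable.
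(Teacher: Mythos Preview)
Your proof is correct and follows essentially the same route as the paper's: dispose of the zero case, write $g$ as a monomial times $f$, compute $\conj{g}$ two ways, cancel $f$, and read off the coefficient and exponent relations. In fact your argument is slightly more careful than the paper's at one point: the paper writes ``there is some $w \in \Cu$'' and then uses this same $w$ as the witness in the conclusion, leaving the passage from $\Cu$ to $\Fu$ implicit, whereas you invoke \cref{Ernie} explicitly to secure $c \in \Fu$ from the outset.
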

\begin{proof}
Since $f$ and $g$ are $\laurc$-associates, either $f=g=0$ or both $f$ and $g$ are nonzero; we are done in the former case, so assume that the latter case holds.
Then there is some $w \in \Cu$ and $i \in \Z$ such that $g=w z^i f$.
We conjugate both sides to obtain $\conj{g} = \conj{w} z^{-i} \conj{f}$ and substitute the expressions for $\conj{g}$ and $\conj{f}$ from the statement of the lemma to obtain $v z^k g = \conj{w} z^{-i} u z^j f$, and then replace $g$ with $w z^i f$ again to obtain $v z^k w z^i f = \conj{w} z^{-i} u z^j f$.  Since $F[z,z^{-1}]$ is an integral domain, we can cancel out the nonzero term $f$ to obtain $v w z^{i+k} = u \conj{w} z^{j-i}$, and matching constants and exponents produces $v/u=\conj{w}/w$ and $j=k+2 i$.
\end{proof}

Now we present some basic results on equicorrelationality.
\begin{lemma}
If $f,g \in \laurc$ are equicorrelational, then $\len f = \len g$.
\end{lemma}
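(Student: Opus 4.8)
The plan is to reduce everything to the length identity $\len(f\conj{f})=2\len f-1$ together with the fact that $\laurc$-associates have equal length. Recall from the Introduction that $f$ and $g$ being equicorrelational means precisely that $f\conj{f}$ and $g\conj{g}$ are $\laurc$-associates. First I would dispose of the degenerate case: if $f=0$, then $f\conj{f}=0$, and since $g\conj{g}$ is a $\laurc$-associate of $0$ it must also be $0$; because $\laurc$ is an integral domain this forces $g=0$, so that $\len f=\len g=0$. From this point on I would assume that both $f$ and $g$ are nonzero.

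Next I would compute the length of an autocorrelation function. By \cref{Lester} we have $\len\conj{f}=\len f$, and since $f$ and $\conj{f}$ are both nonzero, \cref{Leonard} gives $\len(f\conj{f})=\len f+\len\conj{f}-1=2\len f-1$. The identical computation yields $\len(g\conj{g})=2\len g-1$.

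Finally, because $f\conj{f}$ and $g\conj{g}$ are $\laurc$-associates, they have the same length: multiplication by a unit, which has length $1$, leaves length unchanged by \cref{Leonard}, as was already noted in the preliminaries. Hence $2\len f-1=2\len g-1$, and therefore $\len f=\len g$.

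The argument is entirely mechanical once the right lemmas are assembled, so there is no substantial obstacle. The only point that requires care is the zero case: \cref{Leonard} presumes nonzero factors, so the length formula $\len(f\conj{f})=2\len f-1$ cannot be invoked when $f=0$, and this degenerate situation must be handled separately before the main computation.
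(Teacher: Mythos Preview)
Your proof is correct and follows essentially the same route as the paper's: both use \cref{Leonard} and \cref{Lester} to obtain $\len(f\conj{f})=2\len f-1$ and then equate lengths. You are in fact slightly more careful than the paper, which applies \cref{Leonard} directly to $f\conj{f}=c g\conj{g}$ without first disposing of the zero case.
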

\begin{proof}
If $f$ and $g$ are equicorrelational, then $f\conj{f}=c g\conj{g}$ for some positive real number $c$, so by \cref{Leonard} we have $\len f + \len \conj{f}-1=\len c + \len g+\len \conj{g} -2$, and $\len c=1$ since $c$ is a nonzero constant.  Then by \cref{Lester}, we have $2 \len f -1 = 2 \len g-1$, so $\len f=\len g$.
\end{proof}

\begin{lemma}\label{Ursula}
Let $f$ and $g$ be $\laurc$-associates.  Then $f$ is equicorrelational to $g$.
\end{lemma}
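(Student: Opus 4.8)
The plan is to unwind the definition of being $\laurc$-associates, apply conjugation, and then compute the autocorrelation function directly. By definition, $f$ and $g$ being $\laurc$-associates means $f = u g$ for some $u \in \laurcu$. The Introduction records that the units of $\laurc$ are precisely the monomials with nonzero coefficient, so I may write $u = c z^j$ for some $c \in \Cu$ and $j \in \Z$, giving $f = c z^j g$.

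Next I would conjugate this relation. Using that conjugation reverses the sign of exponents and conjugates coefficients (i.e.\ $\conj{c z^j} = \conj{c}\, z^{-j}$ and conjugation is a ring homomorphism), I get $\conj{f} = \conj{c}\, z^{-j}\, \conj{g}$. The point of doing this is that the shift $z^j$ in $f$ and the shift $z^{-j}$ in $\conj{f}$ are inverses of one another, so they will cancel in the product.

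Then I would form the autocorrelation function and simplify:
\[
f \conj{f} = (c z^j g)(\conj{c}\, z^{-j}\, \conj{g}) = c\conj{c}\, g \conj{g} = |c|^2\, g \conj{g}.
\]
Since $c \in \Cu$, the scalar $|c|^2 = c\conj{c}$ is a strictly positive real number, so $f\conj{f} = |c|^2\, g\conj{g}$ exhibits exactly the relation required by the definition of equicorrelationality (equality of autocorrelation functions up to a positive real constant), completing the proof.

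There is no real obstacle here: the argument is a one-line computation once the form of a unit of $\laurc$ is invoked. The only point warranting a moment's care is verifying that the resulting scalar is genuinely a \emph{positive real}, rather than merely a nonzero constant; this is guaranteed because the scalar arises as $c\conj{c}$, a product of a nonzero complex number with its conjugate.
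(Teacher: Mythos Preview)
Your proof is correct and is essentially identical to the paper's own argument: write $f = u g$ with $u = c z^j$, compute $f\conj{f} = u\conj{u}\, g\conj{g} = |c|^2 g\conj{g}$, and observe that $|c|^2$ is a positive real number. The only cosmetic difference is that you spell out the conjugation $\conj{f} = \conj{c}\,z^{-j}\,\conj{g}$ explicitly, whereas the paper compresses this into the single expression $u\conj{u}$.
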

\begin{proof}
We have $f=u g$ for some unit $u$ in $\laurc$, and $u=c z^j$ for some $c \in \Cu$.  Then $f\conj{f}=u\conj{u} g \conj{g}$ and $u\conj{u}=|c|^2$, which is a positive real number, and hence $f$ and $g$ are equicorrelational.
\end{proof}

\begin{lemma}\label{Larry}
Let $f, g \in \laurc$.  Then $f$ is equicorrelational to $g$ if and only if $f\conj{f}$ and $g\conj{g}$ are $\laurc$-associates.
\end{lemma}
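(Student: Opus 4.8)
The plan is to prove the two implications separately, the forward one being essentially immediate and the reverse one requiring a short argument to pin down the connecting unit. For the forward direction, suppose $f$ and $g$ are equicorrelational, so that $f\conj{f}=c\,g\conj{g}$ for some positive real $c$. Since any nonzero constant is a unit in $\laurc$ (it is a monomial $c z^0$ with $c\in\Cu$), this equation already exhibits $f\conj{f}$ and $g\conj{g}$ as $\laurc$-associates, and nothing more is needed.

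For the reverse direction, suppose $f\conj{f}=u\,g\conj{g}$ for some unit $u\in\laurcu$; the goal is to show that $u$ is in fact a positive real constant, which is exactly the condition for equicorrelationality. First I would dispose of the degenerate case: if $f=0$ then $f\conj{f}=0$, which forces $g\conj{g}=0$ and hence $g=0$ because $\laurc$ is an integral domain, so $f$ and $g$ are both zero and trivially equicorrelational; thus I may assume $f,g\neq 0$. Writing $u=c z^j$ with $c\in\Cu$ and $j\in\Z$, the key observation is that both $f\conj{f}$ and $g\conj{g}$ are self-conjugate, since (as recalled in the Introduction) every autocorrelation function is self-conjugate. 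Applying conjugation to $f\conj{f}=u\,g\conj{g}$ and invoking this self-conjugacy on both sides gives $f\conj{f}=\conj{u}\,g\conj{g}$; comparing with the original relation and cancelling the nonzero factor $g\conj{g}$ (valid in the integral domain $\laurc$) yields $u=\conj{u}$. As $\conj{u}=\conj{c}\,z^{-j}$, matching exponents and coefficients forces $j=0$ and $c\in\R$, so that $u=c$ is a real constant and $f\conj{f}=c\,g\conj{g}$.

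The only remaining point, and the one requiring genuine input beyond bookkeeping, is the positivity of $c$. I would settle this by comparing constant coefficients: the constant coefficient of $f\conj{f}$ is $C_f(0)=\sum_j|f_j|^2$, which is strictly positive because $f\neq 0$, and likewise the constant coefficient of $g\conj{g}$ is $C_g(0)=\sum_j|g_j|^2>0$; since $f\conj{f}=c\,g\conj{g}$ now equates these two Laurent polynomials up to the scalar $c$, comparing their constant terms gives $c=C_f(0)/C_g(0)>0$, as desired. I do not expect any real obstacle here: the crux is simply the extraction of the self-conjugacy condition $u=\conj{u}$ and its consequence $j=0$, after which the strict positivity of the zeroth autocorrelation coefficient $C_f(0)$ immediately delivers $c>0$.
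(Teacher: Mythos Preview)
Your proof is correct and follows essentially the same approach as the paper's own proof: both handle the forward direction by noting that a positive real constant is a unit, and for the reverse direction both dispose of the zero case, then conjugate the relation $f\conj{f}=u\,g\conj{g}$ and cancel to deduce $u=\conj{u}$, identify the self-conjugate units of $\laurc$ as the nonzero real constants, and finally compare the constant coefficients $C_f(0)$ and $C_g(0)$ (the squared Euclidean norms) to force $c>0$.
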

\begin{proof}
Suppose that $f$ is equicorrelational to $g$, so that $f\conj{f}=c g\conj{g}$ for some positive real constant $c$.  This $c$ is a unit in $\laurc$, so $f\conj{f}$ and $g\conj{g}$ are $\laurc$-associates.

Now suppose that $f\conj{f}$ and $g\conj{g}$ are $\laurc$-associates.  If either of $f$ or $g$ is $0$, then both must be $0$, and then $f \conj{f} = 1 g \conj{g}$, so that $f$ and $g$ are clearly equicorrelational.  So from now on, we may assume that $f$ and $g$ are nonzero.  So there is some unit $u$ in $\laurc$ such that $f\conj{f}=u g\conj{g}$.
Conjugating both sides yields $f\conj{f}=\conj{u} g\conj{g}$, and since $g$ is nonzero (so $\conj{g}$ is nonzero) and $\laurc$ is an integral domain, we see that $u$ is self-conjugate.
Recall that the units of $\laurc$ are elements of the form $c z^j$ where $c \in \Cu$ and $j \in \Z$, so the self-conjugate units of $\laurc$ are just the nonzero real numbers.
So $u$ is a nonzero real number.
The constant coefficient of $f\conj{f}$ (resp., $g\conj{g}$) is the squared Euclidean norm of the sequence $f$  (resp., $g$) and the former is obtained from the latter by multiplying by the nonzero real number $u$.
Since $f$ and $g$ are nonzero sequences, these constant coefficients of their autocorrelation functions are positive real numbers, and this forces $u$ to be a positive real number, thus making $f$ and $g$ equicorrelational.
\end{proof}
The last results of this section examine the structure of $\laurf$-associate classes and $F$-trivial equicorrelationality classes.
\begin{lemma}\label{Herbert}
If $F$ is a self-conjugate subfield of $\C$ and $f \in \laurf$, then $\conj{\fac{f}}=\fac{\conj{f}}$.
If $f$ is a generalized palindrome, then $\fac{f}$ is self-conjugate and contains only generalized palindromes.
But if $f$ is not a generalized palindrome, then $\fac{f}$ is not self-conjugate and contains no generalized palindromes.
\end{lemma}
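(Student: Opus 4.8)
The plan is to reduce all three assertions to a single equivalence, namely that $f$ is a generalized palindrome if and only if $\fac{f}$ is self-conjugate. The engine for this is the first assertion of the lemma, $\conj{\fac{f}}=\fac{\conj{f}}$. Since $F$ is self-conjugate we have $\conj{F}=F$, so this identity is precisely the remark recorded just before \cref{Lester}, obtained from the fact that conjugation is a ring isomorphism of $\laurf$ carrying units to units and $\laurf$-associates to $\laurf$-associates. I would state that identity first and then treat it as available for the rest of the argument.

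For the pivotal equivalence I would simply unwind definitions. By definition, $f$ is a generalized palindrome exactly when $f$ and $\conj{f}$ are $\laurc$-associates. Because $F$ is self-conjugate we have $\conj{f}\in\laurf$, so \cref{Ernie} upgrades this for free: $f$ and $\conj{f}$ are $\laurc$-associates if and only if they are $\laurf$-associates. But $f$ and $\conj{f}$ being $\laurf$-associates says exactly that $\fac{f}=\fac{\conj{f}}$, and combining this with $\fac{\conj{f}}=\conj{\fac{f}}$ yields $\fac{f}=\conj{\fac{f}}$, which is self-conjugacy of $\fac{f}$. Chaining these equivalences establishes the pivot.

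The remaining two sentences of the lemma then fall out. The self-conjugacy halves of both are immediate from the equivalence, the third assertion being its contrapositive. For ``contains only generalized palindromes'' when $f$ is a generalized palindrome, I would take any $g\in\fac{f}$, observe $\fac{g}=\fac{f}$ is self-conjugate, and conclude that $g$ is a generalized palindrome by applying the equivalence to $g$; alternatively a one-line computation suffices, writing $g=cz^i f$ and $\conj{f}=uz^j f$ to get $\conj{g}=(\conj{c}u/c)z^{j-2i}g$. For ``contains no generalized palindromes'' when $f$ is not one, I would argue by contradiction: if some $g\in\fac{f}$ were a generalized palindrome, then $f$ and $g$ are $\laurc$-associates, hence so are $\conj{f}$ and $\conj{g}$, and since $g$ and $\conj{g}$ are $\laurc$-associates, transitivity forces $f$ and $\conj{f}$ to be $\laurc$-associates, contradicting that $f$ is not a generalized palindrome.

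The argument is essentially bookkeeping, so I do not anticipate a serious obstacle. The one point demanding care is the repeated and essential use of the hypothesis that $F$ is self-conjugate: this is what keeps conjugation inside $\laurf$ and what lets \cref{Ernie} bridge between $\laurf$-associates and $\laurc$-associates. Keeping the equivalence \emph{generalized palindrome $\iff$ self-conjugate associate class} as the central pivot is what collapses the three separate assertions into what is really one statement.
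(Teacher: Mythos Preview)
Your proposal is correct and follows essentially the same approach as the paper: both establish $\conj{\fac{f}}=\fac{\conj{f}}$ from the remark before \cref{Lester} (using $\conj{F}=F$), then use \cref{Ernie} to prove the central equivalence that $f$ is a generalized palindrome if and only if $\fac{f}$ is self-conjugate, and finally observe that applying this equivalence to an arbitrary representative of the class handles the ``contains only/no generalized palindromes'' clauses. The only difference is that you spell out the last step a bit more explicitly than the paper does.
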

\begin{proof}
Recall that $\conj{\fac{f}}=[\conj{f}]_{\conj{F}}$, but since $F$ is self-conjugate, this means that $\conj{\fac{f}}=\fac{\conj{f}}$.
Note that $f$ is a generalized palindrome if and only if $\conj{f} \in \ac{f}$, which is true if and only if $\ac{\conj{f}}=\ac{f}$, which (because both $f$ and $\conj{f}$ lie in $\laurf$ since $F$ is self-conjugate) is true by \cref{Ernie} if and only if $\fac{\conj{f}}=\fac{f}$, which (by what we just proved) is equivalent to $\conj{\fac{f}}=\fac{f}$, which is the same as saying that $\fac{f}$ is self-conjugate.
The fact that an $\laurf$-associate class is self-conjugate if and only if an arbitrary representative is a generalized palindrome means that an $\laurf$-associate class cannot have some element that is a generalized palindrome and another element that is not.
\end{proof}
Now we present a result on how $F$-trivial equicorrelationality classes are related to $\laurf$-associate classes, and how these classes behave when they contain generalized palindromes.
\begin{lemma}\label{Felix}
Let $F$ be a self-conjugate subfield of $\C$ and $f \in \laurf$.
Then $\fte{f}$ is self-conjugate and $\fte{f}=\fac{f}\cup\fac{\conj{f}}$.
If $f$ is a generalized palindrome, then every element of $\fte{f}$ is a generalized palindrome, and $\fte{f}=\fac{f}=\fac{\conj{f}}$.
If $f$ is not a generalized palindrome, then no element of $\fte{f}$ is a generalized palindrome, and $\fte{f}$ is the union of two disjoint non-self-conjugate $\laurf$-associate classes, $\fac{f}$ and $\fac{\conj{f}}$, which are conjugates of each other.
\end{lemma}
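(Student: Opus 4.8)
The plan is to reduce everything to the two preceding structural results, \cref{Ernie} and \cref{Herbert}, rather than argue from scratch. First I would establish the decomposition $\fte{f}=\fac{f}\cup\fac{\conj{f}}$, which underlies every subsequent claim. By definition $\fte{f}=\te{f}\cap\laurf=(\ac{f}\cup\ac{\conj{f}})\cap\laurf$, and since intersection distributes over union this equals $(\ac{f}\cap\laurf)\cup(\ac{\conj{f}}\cap\laurf)$. Because $F$ is self-conjugate we have $\conj{f}\in\laurf$, so \cref{Ernie} applies to both terms, turning them into $\fac{f}$ and $\fac{\conj{f}}$ respectively, which is the desired decomposition.

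With this in hand, self-conjugacy is immediate. Conjugating the union and invoking $\conj{\fac{f}}=\fac{\conj{f}}$ from \cref{Herbert}, together with the fact that conjugation is an involution (so that $\conj{\fac{\conj{f}}}=\fac{\conj{\conj{f}}}=\fac{f}$), merely swaps the two terms of the union and therefore fixes $\fte{f}$.

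Next I would split into the two cases according to whether $f$ is a generalized palindrome. If $f$ is a generalized palindrome, then $\conj{f}\in\ac{f}$; since both lie in $\laurf$, \cref{Ernie} upgrades this to $\conj{f}\in\fac{f}$, whence $\fac{\conj{f}}=\fac{f}$ and the union collapses to $\fte{f}=\fac{f}=\fac{\conj{f}}$, with every element a generalized palindrome by \cref{Herbert}. If $f$ is not a generalized palindrome, then \cref{Herbert} says $\fac{f}$ is non-self-conjugate with conjugate $\fac{\conj{f}}\not=\fac{f}$; since distinct $\laurf$-associate classes partition $\laurf$, they are disjoint, giving the claimed union of two disjoint classes that are conjugates of each other. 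Finally, $\conj{f}$ fails to be a generalized palindrome exactly when $f$ does (again because conjugation is an involution), so \cref{Herbert} applied to both $f$ and $\conj{f}$ shows that neither $\fac{f}$ nor $\fac{\conj{f}}$ contains a generalized palindrome, hence no element of $\fte{f}$ does.

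The whole argument is essentially bookkeeping layered on top of \cref{Ernie} and \cref{Herbert}; the only point demanding slight care is the generalized-palindrome case, where one must justify the passage from $\conj{f}\in\ac{f}$ to the equality $\fac{\conj{f}}=\fac{f}$ of $\laurf$-associate classes (rather than merely of $\laurc$-associate classes). This is exactly where the self-conjugacy of $F$, which guarantees $\conj{f}\in\laurf$, and \cref{Ernie}, which lets one intersect the $\laurc$-associate class with $\laurf$, are both indispensable.
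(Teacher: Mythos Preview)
Your proof is correct and follows essentially the same route as the paper's: both first establish $\fte{f}=\fac{f}\cup\fac{\conj{f}}$ via the definition, distributivity, and \cref{Ernie}, then derive self-conjugacy and the two cases by invoking \cref{Herbert}. The only cosmetic differences are that in the generalized-palindrome case the paper passes through $\ac{\conj{f}}=\ac{f}$ before intersecting with $\laurf$, whereas you apply \cref{Ernie} directly to $\conj{f}\in\ac{f}$, and in the non-palindrome case the paper argues disjointness at the $\laurc$-level first while you go straight to the $\laurf$-level via \cref{Herbert}; neither difference is substantive.
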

\begin{proof}
By definition, $\fte{f}=\te{f}\cap\laurf$.  But $\te{f}=\ac{f}\cup\ac{\conj{f}}$, so
\begin{align}
\begin{split}\label{Jennifer}
\fte{f}
& =\left(\ac{f}\cup\ac{\conj{f}}\right)\cap\laurf\\
& =\left(\ac{f}\cap\laurf\right)\cup\left(\ac{\conj{f}}\cap\laurf\right) \\
& = [f]_F \cup [\conj{f}]_F,
\end{split}
\end{align}
where the third equality is from \cref{Ernie}.
Then we conjugate \eqref{Jennifer} to obtain $\conj{\fte{f}}=\conj{[f]_F \cup [\conj{f}]_F}=\conj{\fac{f}}\cup\conj{\fac{\conj{f}}}$.
By \cref{Herbert} we obtain $\conj{\fte{f}}=[\conj{f}]_F \cup [f]_F$, and then another application of \eqref{Jennifer} shows that $\conj{\fte{f}}=\fte{f}$, so $\fte{f}$ is self-conjugate.

If $f$ is a generalized palindrome, then $\conj{f} \in \ac{f}$, so then $\ac{\conj{f}}=\ac{f}$, and so $\fac{\conj{f}}=\fac{f}$, so then \eqref{Jennifer} shows that $\fac{\conj{f}}=\fac{f}=\fte{f}$.
Furthermore, \cref{Herbert} shows that every element of $\fac{f}$ is a generalized palindrome.

On the other hand, if $f$ is not a generalized palindrome, then $\conj{f}\not\in\ac{f}$, so $\ac{\conj{f}}$ must be disjoint from $\ac{f}$ since these are classes of an equivalence relation.
Thus, $\fac{\conj{f}}$ must be disjoint from $\fac{f}$.
Since $f$ is not a generalized palindrome, we know that $\conj{f}$ is not a generalized palindrome, and then \cref{Herbert} says that neither $\fac{f}$ nor $\fac{\conj{f}}$ is self-conjugate, nor does either of these two contain a generalized palindrome.
Thus, \eqref{Jennifer} shows that $\fte{f}$ contains no generalized palindrome.
We also note that the two $\laurf$-associate classes, $\fac{f}$ and $\fac{\conj{f}}$, are conjugates of each other by \cref{Herbert}.
\end{proof}

\section{Proof of \cref{Barbara}}\label{Jason}
In this section, we prove \cref{Barbara} from the Introduction.  The statements about what happens when $f=0$ arise from the observation made earlier in the Introduction that no nonzero sequence can be equicorrelational to the zero sequence, so we assume that $f\not=0$ henceforth.
Let $h$ be a sequence in $F[z,z^{-1}]$.
Then $h$ is equicorrelational to $f$ if and only if $h\conj{h} = t f\conj{f}$ for some positive real number $t \in F$.
Therefore, if $h$ is equicorrelational to $f$, then the unique factorization of $h$ can only contain the irreducibles in $f\conj{f}$.
Because $f_1,\ldots,f_m$ are generalized palindromes (hence associate to their own conjugates), in searching for the sequences equicorrelational to $f$ we can confine ourselves to sequences $h$ that can be written as 
\[
h = v f_1^{a_1'}\cdots f_m^{a_m'} g_1^{b_1'} \cdots g_n^{b_n'} \conj{g_1}^{c_1'} \cdots \conj{g_n}^{c_n'},
\]
for some unit $v \in F[z,z^{-1}]$ and $a'=(a_1',\ldots,a_m') \in \N^m$ and $b'=(b_1',\ldots,b_n')$, $c'=(c_1',\ldots,c_n') \in \N^n$.
For such a sequence $h$, the product $h\conj{h}$ has a unique factorization with $2 a_i'$ factors of each $f_i$ as well as $b_j' + c_j'$ factors of each $g_j$ and $b_j'+c_j'$ factors of each $\conj{g_j}$.
Meanwhile, $f\conj{f}$ has $2 a_i$ factors of each $f_i$ as well as $b_j + c_j$ factors of each $g_j$ and $b_j+c_j$ factors of each $\conj{g_j}$.
So $h$ is equicorrelational to $f$ if and only if $a'=a$ and $b'+c'=b+c$, which is true if and only if $h$ is in the union on the right-hand side of \eqref{Rosa}.
This union is of pairwise disjoint classes because the representatives that we have written for the $\laurf$-associate classes in the union are all non-$\laurf$-associates of each other.

The number of $\laurf$-associate classes in the union from \eqref{Rosa} equals the number of pairs $(b',c') \in \N^n \times \N^n$ such that $b'+c'=b+c$.
This last constraint forces $b' \in \prod_{j=1}^n \{0,1,\ldots,b_j+c_j\}$, and for each such $b'$, there is a unique $c'=b+c-b' \in \N^n$ such that $b'+c'=b+c$, so we have precisely 
\[
N=\prod_{j=1}^n |\{0,1,\ldots,b_j+c_j\}|=\prod_{j=1}^n (b_j+c_j+1)
\]
classes. 

The conjugate of a representative
\[
r=f_1^{a_1}\cdots f_m^{a_m} g_1^{b_1'} \cdots g_n^{b_n'} \conj{g_1}^{c_1'} \cdots \conj{g_n}^{c_n'}
\]
of one of the $\laurf$-associate classes in \eqref{Rosa} is $\conj{r}=w s$, where
\[
s= f_1^{a_1}\cdots f_m^{a_m} g_1^{c_1'} \cdots g_n^{c_n'} \conj{g_1}^{b_1'} \cdots \conj{g_n}^{b_n'}
\]
and where $w$ is some unit in $\laurf$ because each of $f_1,\ldots,f_m$ is a generalized palindrome.
Thus, by \cref{Herbert}, the conjugate of $[r]_F$ is $[\conj{r}]_F=[s]_F$, which means that the $\laurf$-associate class in \eqref{Rosa} indexed by $(b',c')$ is the conjugate of the class indexed by $(c',b')$.
Thus, $[r]_F$ is self-conjugate if and only if $b'=c'$.
This can be true of only one class (the one with $b'=c'=(b+c)/2$ if $b_j+c_j$ is even for every $j$) or none at all (if $b_j+c_j$ is odd for at least one $j$).
Hence, if $N$ is odd, then precisely one $\laurf$-associate class in \eqref{Rosa} is self-conjugate, but if $N$ is even, then no such class is self-conjugate, and in either case, the rest of the $\laurf$-associate classes occur in conjugate pairs.

Recall from \cref{Felix} that the $F$-trivial equicorrelationality class of $r$ is $\fac{r} \cup \fac{\conj{r}}$, which is either the union of two disjoint non-self-conjugate $\laurf$-associate classes or else is equal to a single self-conjugate $\laurf$-associate class.
Thus, when we pair up the class in \eqref{Rosa} indexed by $(b',c')$ with the one indexed by $(c',b')$, we produce a single $F$-trivial equicorrelationality class in \eqref{Rose}, and so in order to make \eqref{Rose} a union of pairwise disjoint classes, we impose the condition $b' \leq c'$.
Since every $F$-trivial equicorrelationality class in \eqref{Rose} arises from two $\laurf$-associate classes in \eqref{Rosa} (with the exception of the single self-conjugate $\laurf$-associate class that occurs when $N$ is odd---this single class is itself also an $F$-trivial equicorrelationality class), the number of $F$-trivial equicorrelationality classes in \eqref{Rose} is $\ceil{N/2}$.
Then $f$ is nontrivially equicorrelational to some other sequence in $\laurf$ if and only if \eqref{Rose} is a union of more than one $F$-trivial equicorrelationality class.
This happens if and only if $\ceil{N/2} > 1$, i.e., if and only if $N \geq 3$.\hfill\qedsymbol

\section{Equicorrelationality of generalized palindromes}\label{Alexandra}

Since \cref{Barbara} only allows for at most one self-conjugate $\laurf$-associate class within an equicorrelationality class, the following corollary, which was stated as \cref{David} in the Introduction, can now be proved.
\begin{corollary}\label{Eric}
If $f$ and $g$ are generalized palindromes that are equicorrelational, then they must be $\laurc$-associates.
\end{corollary}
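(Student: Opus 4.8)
The plan is to apply \cref{Barbara} with the self-conjugate subfield taken to be $F=\C$ itself, so that $\laurf=\laurc$ and the $\laurf$-associate classes appearing in the decomposition of the equicorrelationality class are exactly the $\laurc$-associate classes we care about. First I would dispose of the degenerate case: if either of $f$ or $g$ is the zero sequence, then equicorrelationality forces both to be zero (no nonzero sequence is equicorrelational to $0$), and then they are trivially $\laurc$-associates. So from this point I would assume $f$ and $g$ are nonzero and fix a factorization of $f$ as in \eqref{Daisy}.

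Because $f$ and $g$ are equicorrelational and both lie in $\laurc=\laurf$, the sequence $g$ belongs to $\feq{f}$. By \cref{Barbara}, $\feq{f}$ is the disjoint union \eqref{Rosa} of $N$ $\laurc$-associate classes, so each of $\fac{f}$ and $\fac{g}$ must be one of the classes appearing in this union (indeed $\fac{f}$ is the class indexed by the original data $(b,c)$, and $\fac{g}$ is whichever class contains $g$). The key translation is then that a generalized palindrome has a self-conjugate associate class: by \cref{Herbert}, since both $f$ and $g$ are generalized palindromes, both $\fac{f}$ and $\fac{g}$ are self-conjugate.

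Finally I would invoke the uniqueness clause of \cref{Barbara}: among the $N$ associate classes in \eqref{Rosa} there is at most one that is self-conjugate (precisely one when $N$ is odd, and none when $N$ is even). Since $\fac{f}$ is a self-conjugate class in this union, such a class exists and is unique, which forces $N$ to be odd; and since $\fac{g}$ is another self-conjugate class in the very same union, it must coincide with the unique one. Hence $\fac{f}=\fac{g}$, which is precisely the statement that $f$ and $g$ are $\laurc$-associates.

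I expect this proof to be short, with essentially all of the work already carried out in \cref{Barbara}. There is no genuine obstacle beyond correctly matching two descriptions: recognizing that ``generalized palindrome'' corresponds exactly to ``self-conjugate associate class'' (supplied by \cref{Herbert}) and that the choice $F=\C$ turns the associate classes of \cref{Barbara} into the $\laurc$-associate classes appearing in the statement. The only point demanding a little care is to confirm that $\fac{g}$ really is one of the classes enumerated by the union, which is immediate once we note that $g\in\feq{f}$ and that the union equals $\feq{f}$.
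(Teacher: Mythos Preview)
Your proposal is correct and follows essentially the same approach as the paper's proof: handle the zero case, apply \cref{Barbara} with $F=\C$, invoke \cref{Herbert} to translate ``generalized palindrome'' into ``self-conjugate associate class,'' and use the at-most-one self-conjugate class clause of \cref{Barbara} to force $\fac{f}=\fac{g}$. The only difference is cosmetic---you spell out slightly more detail (e.g., that the existence of a self-conjugate class forces $N$ odd), but the logical skeleton is identical.
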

\begin{proof}
If $f=0$, then it is equicorrelational to $g$ if and only if $g=0$, in which case $f$ and $g$ are clearly $\laurc$-associates.
Assume that $f\not=0$ henceforth.
By \cref{Barbara} (with $F=\C$) there is at most one self-conjugate class in the union on the right-hand side of \eqref{Rosa} of all the $\laurc$-associate classes of sequences that are equicorrelational to $f$.
Since $f$ and $g$ are both generalized palindromes, \cref{Herbert} (with $F=\C$) tells us that they must be in this one self-conjugate $\laurc$-associate class, so $f$ and $g$ are $\laurc$-associates.
\end{proof}
Now we prove the following corollary to \cref{Eric}.
\begin{corollary}\label{Julia}
Let $F$ be a subfield of $\C$ and let $f$ and $g$ be generalized palindromes with $\conj{f}=u z^j f$ and $\conj{g}=v z^k g$ for some $u, v \in \Cu$ and $j,k \in \Z$.
Suppose that $f,g \in \laurf$ and that $f$ and $g$ are equicorrelational.
Then either (i) $f=g=0$ or else (ii) $v/u = \conj{w}/w$ for some $w \in \Fu$ and $j \equiv k \pmod{2}$.
\end{corollary}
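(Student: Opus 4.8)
The plan is to obtain the result by chaining together the two preceding results that have been prepared precisely for this purpose: \cref{Eric} (the statement that equicorrelational generalized palindromes are $\laurc$-associates) and \cref{Laura} (the technical comparison of associate generalized palindromes over a subfield).

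First I would dispose of the degenerate case by noting that if $f=0$, then equicorrelationality forces $g=0$ (since no nonzero sequence is equicorrelational to the zero sequence), landing us in alternative (i); the symmetric argument handles $g=0$. So I may assume both $f$ and $g$ are nonzero. Next, the key observation is that the hypotheses of this corollary already guarantee the hypotheses of \cref{Eric}: $f$ and $g$ are generalized palindromes that are equicorrelational. Applying \cref{Eric} therefore yields that $f$ and $g$ are $\laurc$-associates.

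At this point I would invoke \cref{Laura} directly. Its hypotheses are that $f$ and $g$ are generalized palindromes with conjugation relations $\conj{f}=u z^j f$ and $\conj{g}=v z^k g$ for some $u,v \in \Cu$ and $j,k \in \Z$, that $f,g \in \laurf$, and that $f$ and $g$ are $\laurc$-associates. Every one of these is now in hand: the first two come verbatim from the statement of this corollary, and the last was just established via \cref{Eric}. The conclusion of \cref{Laura} is exactly the dichotomy we seek, namely that either $f=g=0$ or else $v/u=\conj{w}/w$ for some $w \in \Fu$ and $j \equiv k \pmod 2$, which completes the argument.

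I do not anticipate any genuine obstacle here, since all the substantive work has already been carried out upstream: \cref{Eric} rests on \cref{Barbara} (whose content is the uniqueness of a self-conjugate $\laurf$-associate class within an equicorrelationality class), while \cref{Laura} supplies the concrete relationship between the palindromic scalars. The only point requiring a moment's care is the bookkeeping in the degenerate case, ensuring that the ``$f=0$ or $g=0$'' situation is routed into alternative (i) rather than mistakenly into (ii); once that is checked, the proof is essentially a two-line deduction.
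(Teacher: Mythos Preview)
Your proposal is correct and follows exactly the same route as the paper: invoke \cref{Eric} to conclude that $f$ and $g$ are $\laurc$-associates, then feed this into \cref{Laura} to obtain the stated dichotomy. Your explicit handling of the zero case is a harmless elaboration, since \cref{Eric} already covers it and \cref{Laura} already outputs alternative (i) in that situation.
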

\begin{proof}
\cref{Eric} shows that $f$ and $g$ must be $\laurc$-associates, so then the conclusion follows from \cref{Laura}.
\end{proof}
Now we have the following consequence, recorded in the Introduction as \cref{Francis}.
\begin{corollary}
It is not possible for a palindrome in $\R[z,z^{-1}]$ to be equicorrelational to an antipalindrome in $\R[z,z^{-1}]$ unless both the sequences are $0$.
\end{corollary}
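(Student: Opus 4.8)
The plan is to deduce this directly from \cref{Julia} applied with $F=\R$. First I would note that a palindrome $f$ obeys $\conj{f}=z^j f$ and an antipalindrome $g$ obeys $\conj{g}=-z^k g$, which are precisely the generalized-palindrome relations $\conj{f}=u z^j f$ and $\conj{g}=v z^k g$ with $u=1$ and $v=-1$. Both $f$ and $g$ lie in $\laurr$ by definition, so with $F=\R$ the hypotheses of \cref{Julia} are satisfied whenever $f$ and $g$ are equicorrelational.

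\cref{Julia} then presents a dichotomy: either $f=g=0$, which is the conclusion we want, or $v/u=\conj{w}/w$ for some $w \in \R^\times$ (together with a parity condition on $j$ and $k$ that I will not even need). The key step is to eliminate the second alternative. Because $w$ is a nonzero real number, it is self-conjugate, so $\conj{w}/w=1$; on the other hand $v/u=-1/1=-1$. Since $-1 \neq 1$, the second alternative is impossible, and we conclude $f=g=0$.

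The argument has no real obstacle: all the substantive work is carried out in \cref{Julia} (and ultimately in \cref{Barbara}). The only thing to observe is that passing to the real field forces the unit ratio $\conj{w}/w$ to equal $1$, which directly contradicts the ratio $-1$ coming from the opposing signs in the palindrome and antipalindrome relations.
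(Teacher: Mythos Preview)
Your proposal is correct and follows essentially the same approach as the paper: apply \cref{Julia} with $F=\R$, $u=1$, $v=-1$, and observe that the second alternative would require $\conj{w}/w=-1$ for some real $w$, which is impossible since real numbers are self-conjugate.
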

\begin{proof}
Suppose that a palindrome in $\R[z,z^{-1}]$ is equicorrelational to an antipalindrome in $\R[z,z^{-1}]$.  Then we apply \cref{Julia} with $u=1$ and $v=-1$ to see that there must be some $w \in \R^\times$ such that $\conj{w}/w=v/u=-1$, which is absurd.
\end{proof}

\section{Unequivocal integers}\label{Hilda}

Recall from the Introduction that we say that a binary sequence is equivocal to mean that it is nontrivially equicorrelational to some other binary sequence, and we say that a positive integer $n$ is equivocal to mean that there is an equivocal binary sequence of length $n$.  The main purpose of this section is to prove \cref{Karen}, which states that every positive multiple of an equivocal number is equivocal.  
We begin along this path with a straightforward construction that takes in two $k$-ary sequences and produces a new $k$-ary sequence whose length is the product of the lengths of the inputs.
\begin{construction}\label{Clarence}
Let $k$ be a positive integer and $\ell$ and $m$ be nonnegative integers.  Let $a$ be a $k$-ary sequence of length $\ell$ and $b$ be a $k$-ary sequence of length $m$.  Then $a(z^m) b(z)$ is a $k$-ary sequence of length $\ell m$.
\end{construction}
We now show that this construction preserves equicorrelationality in the sense that if $c$ and $d$ are sequences that are equicorrelational to $a$ and $b$, respectively, then the output sequence $c(z^m) d(z)$ is equicorrelational to $a(z^m) b(z)$.  In fact, we prove something more general in the next lemma.
\begin{lemma}\label{Victoria}
Let $m,n \in \Z$ and let $a,b,c,d \in \laurc$ such that $a$ is equicorrelational to $c$ and $b$ is equicorrelational to $d$.
Then $f(z)=a(z^m)b(z^n)$ is equicorrelational to $g(z)=c(z^m)d(z^n)$.
\end{lemma}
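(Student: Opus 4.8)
The plan is to work directly from the definition: two sequences are equicorrelational precisely when their autocorrelation functions agree up to a positive real scalar. So I would first record the hypotheses as $a\conj{a} = s\, c\conj{c}$ and $b\conj{b} = t\, d\conj{d}$ for some positive reals $s, t \in \R$, and aim to produce a positive real $r$ with $f\conj{f} = r\, g\conj{g}$.

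The crux is a single algebraic observation: conjugation commutes with the substitution $z \mapsto z^k$. Writing $a(z) = \sum_j a_j z^j$ gives $a(z^k) = \sum_j a_j z^{jk}$, so $\conj{a(z^k)} = \sum_j \conj{a_j} z^{-jk}$, which is exactly $\conj{a}(z^k)$; hence $\conj{a(z^k)} = \conj{a}(z^k)$ for every $k \in \Z$. I would state this as a short claim, since it is the only place where the definition of conjugation is actually needed.

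With this in hand, the rest is ring-homomorphism bookkeeping. Since the substitution $z \mapsto z^k$ is a ring homomorphism of $\laurc$, I would compute
\[
f\conj{f} = a(z^m)\,b(z^n)\,\conj{a}(z^m)\,\conj{b}(z^n) = \bigl(a\conj{a}\bigr)(z^m)\,\bigl(b\conj{b}\bigr)(z^n),
\]
and likewise $g\conj{g} = (c\conj{c})(z^m)\,(d\conj{d})(z^n)$. Substituting $z \mapsto z^m$ into $a\conj{a} = s\, c\conj{c}$ and $z \mapsto z^n$ into $b\conj{b} = t\, d\conj{d}$, and noting that a constant is fixed by every such substitution, gives $(a\conj{a})(z^m) = s\,(c\conj{c})(z^m)$ and $(b\conj{b})(z^n) = t\,(d\conj{d})(z^n)$. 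Multiplying these yields $f\conj{f} = st\, g\conj{g}$, and since $st > 0$ the two sequences are equicorrelational.

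I expect no serious obstacle; the only subtlety is to confirm that the argument survives the degenerate substitutions. When $m = 0$ (or $n = 0$) the map $z \mapsto z^0$ is evaluation at $z = 1$ rather than an injective endomorphism, but it is still a ring homomorphism fixing constants, so every step above is unaffected; the potential vanishing of a factor (say $a(1) = 0$) merely forces the corresponding $g$-factor to vanish as well, leaving $0 = st \cdot 0$. Thus the single commutation identity really does reduce the whole statement to a routine computation.
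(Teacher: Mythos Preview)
Your proof is correct and follows essentially the same approach as the paper: both record the hypotheses as $a\conj{a}=s\,c\conj{c}$ and $b\conj{b}=t\,d\conj{d}$, invoke the fact that the substitution $z\mapsto z^k$ is a ring homomorphism commuting with conjugation, and multiply to obtain $f\conj{f}=st\,g\conj{g}$. The paper packages the substitution as a named map $\phi_j$ while you expand the commutation identity explicitly and add a remark on the degenerate case $m=0$ or $n=0$, but the argument is the same.
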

\begin{proof}
By the assumption of equicorrelationality, there are positive real numbers $s$
and $t$ such that $a\conj{a} = s c \conj{c}$ and $b\conj{b} = t d \conj{d}$.
For each $j \in \Z$, let $\phi_j\colon \laurc \to \laurc$ be the ring homomorphism with $\phi_j(u(z))=u(z^j)$.  Note that $\phi_j$ commutes with the conjugation map $u(z)\mapsto \conj{u(z)}$.  Thus
\begin{align*}
a(z^m)b(z^n) \conj{a(z^m)b(z^n)}
& = \phi_m(a(z) \conj{a(z)}) \phi_n(b(z) \conj{b(z)}) \\
& = \phi_m(s c(z) \conj{c(z)}) \phi_n(t d(z) \conj{d(z)}) \\
& = s t c(z^m)d(z^n) \conj{c(z^m)d(z^n)},
\end{align*}
and since $s t$ is a positive real number, we see that $a(z^m)b(z^n)$
is equicorrelational to $c(z^m)d(z^n)$.
\end{proof}
We would like to know when the equicorrelationality of $f(z)$ and $g(z)$ in \cref{Victoria} is nontrivial.  To this end, we first begin with a result that shows when $f(z)$ and $g(z)$ are associates.  Recall from the Introduction that if $f \in \laurc$, then the class of $\laurc$-associates of $f$ is denoted $[f]$.
\begin{lemma}\label{Deidre}
Let $m$ and $n$ be nonzero integers, let $a,b,c,d \in \laurc$, and let $f(z)=a(z^m)b(z^n)$ and $g(z)=c(z^m)d(z^n)$.
Then $[f]=[g]$ if and only if one of the following holds:
\begin{enumerate}[label=(\roman*)]
\item\label{Deidre-zero} $0 \in \{a,b\}$ and $0 \in \{c,d\}$; or
\item\label{Deidre-nonzero} $0 \not\in \{a,b,c,d\}$ and both $[\alpha(z^m)]=[\delta(z^n)]$ and $[\beta(z^n)]=[\gamma(z^m)]$ hold, where $\alpha,\beta,\gamma,\delta$ are the sequences such that $a=\gcd(a,c) \alpha$, $c=\gcd(a,c) \gamma$, $b=\gcd(b,d) \beta$, and $d=\gcd(b,d) \delta$.
\end{enumerate}
\end{lemma}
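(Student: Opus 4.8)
The plan is to split on whether any of $a,b,c,d$ vanishes, dispose of the degenerate case directly, and then reduce the substantive case to a coprimality-and-unique-factorization argument. First I would record that for a nonzero integer $m$ the substitution $z \mapsto z^m$ is an injective endomorphism of $\laurc$; equivalently, $\len a(z^m) = (-1 + \len a)|m| + 1 \geq 1$ whenever $a \neq 0$ by \cref{Samuel}, so $a(z^m) = 0$ if and only if $a = 0$. Hence $f = a(z^m)b(z^n)$ vanishes if and only if $0 \in \{a,b\}$, and likewise $g$ vanishes if and only if $0 \in \{c,d\}$. Since $[0] = \{0\}$, if either of $f,g$ is zero then $[f]=[g]$ forces both to be zero; this yields exactly condition \ref{Deidre-zero}, and it also shows that when precisely one of $f,g$ is zero neither \ref{Deidre-zero} nor \ref{Deidre-nonzero} holds while $[f]\neq[g]$, so the two conditions are mutually exclusive and the claimed equivalence is consistent in every boundary case.

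For the main case I would assume $0 \notin \{a,b,c,d\}$, so all of $\alpha,\beta,\gamma,\delta$ and both $f,g$ are nonzero. Writing $a = \gcd(a,c)\,\alpha$, $c = \gcd(a,c)\,\gamma$, $b = \gcd(b,d)\,\beta$, $d = \gcd(b,d)\,\delta$ and applying the substitution homomorphisms gives $f = P\,\alpha(z^m)\,Q\,\beta(z^n)$ and $g = P\,\gamma(z^m)\,Q\,\delta(z^n)$, where $P = \gcd(a,c)(z^m)$ and $Q = \gcd(b,d)(z^n)$ are nonzero. Since $\laurc$ is an integral domain, cancelling the common factor $PQ$ shows that $[f]=[g]$ if and only if $\alpha(z^m)\beta(z^n)$ and $\gamma(z^m)\delta(z^n)$ are $\laurc$-associates (written $\sim$ below). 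Thus the entire problem reduces to proving that these two products are associates precisely when $[\alpha(z^m)]=[\delta(z^n)]$ and $[\beta(z^n)]=[\gamma(z^m)]$.

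To finish I would exploit coprimality. By the defining property of the gcd, $\alpha$ and $\gamma$ are relatively prime and $\beta$ and $\delta$ are relatively prime, so \cref{Conrad} gives $\gcd(\alpha(z^m),\gamma(z^m)) = 1$ and $\gcd(\beta(z^n),\delta(z^n)) = 1$. The ``if'' direction is immediate: multiplying $\alpha(z^m) \sim \delta(z^n)$ and $\gamma(z^m) \sim \beta(z^n)$ yields $\alpha(z^m)\beta(z^n) \sim \gamma(z^m)\delta(z^n)$. The ``only if'' direction is the crux, and is where the crosswise pairing must be extracted: assuming the products are associates, $\gamma(z^m)$ divides $\alpha(z^m)\beta(z^n)$ yet is coprime to $\alpha(z^m)$, so by Euclid's lemma in the UFD $\laurc$ it divides $\beta(z^n)$; symmetrically $\beta(z^n)$ divides $\gamma(z^m)\delta(z^n)$ and is coprime to $\delta(z^n)$, so it divides $\gamma(z^m)$, whence $\beta(z^n) \sim \gamma(z^m)$. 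The same pair of steps applied to $\alpha(z^m)$ and $\delta(z^n)$, using coprimality of $\alpha(z^m)$ with $\gamma(z^m)$ and of $\delta(z^n)$ with $\beta(z^n)$, gives $\alpha(z^m) \sim \delta(z^n)$. These are exactly the two associate-class equalities in \ref{Deidre-nonzero}. The only real obstacle is the careful application of coprimality to force the two factors of one product to match the correct crosswise factors of the other; everything else is routine cancellation and unique factorization.
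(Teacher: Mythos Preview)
Your proof is correct and follows essentially the same route as the paper: dispose of the zero cases, cancel the common factors $\gcd(a,c)(z^m)$ and $\gcd(b,d)(z^n)$, and then use the coprimality furnished by \cref{Conrad} together with unique factorization to force the crosswise associate relations. The only difference is that the paper leaves the final step as a one-line appeal to coprimality and the UFD property, whereas you spell out the Euclid's-lemma argument explicitly; your version is a faithful expansion of what the paper asserts without proof.
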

\begin{proof}
In case \ref{Deidre-zero}, we have $f=g=0$, so $[f]=[g]$.
If $0$ is in one and only one of $\{a,b\}$ or $\{c,d\}$, then one and only one of the two sequences $f$ and $g$ is zero, and then $[f]\not=[g]$.
So we may assume $0\not\in\{a,b,c,d\}$ for the rest of the proof.

We let $s=\gcd(a,c)$ and $t=\gcd(b,d)$; these are nonzero because $a,b,c,d$ are nonzero.
Then we define $\alpha,\beta,\gamma,\delta$ as in \ref{Deidre-nonzero}, so that $a=s \alpha$, $b=t \beta$, $c=s \gamma$, and $d=t \delta$.
We note that $[f]=[g]$ if and only if there is a $u \in \laurcu$ such that $f= u g$.
So $[f]=[g]$ if and only if there is a $u \in \laurcu$ such that $u(z) s(z^m) \alpha(z^m) t(z^n) \beta(z^n)=s(z^m) \gamma(z^m) t(z^n) \delta(z^n)$.  
Notice that $s(z^m)$ and $t(z^n)$ are nonzero because $s(z)$, $t(z)$, $m$, and $n$ are all nonzero.
Since $\laurcu$ is an integral domain, this means that $[f]=[g]$ if and only if there is a $u \in \laurcu$ such that $u(z) \alpha(z^m) \beta(z^n)=\gamma(z^m) \delta(z^n)$.
But $\alpha(z^m)$ is relatively prime to $\gamma(z^m)$ and $\beta(z^n)$ is relatively prime to $\delta(z^n)$ by \cref{Conrad}, and $\laurc$ is a unique factorization domain (since it is a Euclidean domain).
Thus, $[f]=[g]$ if and only if both $[\alpha(z^m)]=[\delta(z^n)]$ and $[\beta(z^n)]=[\gamma(z^m)]$.
\end{proof}
Now we can show when the $f$ and $g$ constructed in \cref{Victoria} are trivially equicorrelational.
\begin{lemma}\label{Hector}
Let $m$ and $n$ be nonzero integers, let $a,b,c,d \in \laurc$, and let $f(z)=a(z^m)b(z^n)$ and $g(z)=c(z^m)d(z^n)$.
Then $f$ is trivially equicorrelational to $g$ if and only if one of the following holds:
\begin{enumerate}[label=(\roman*)]
\item\label{Hector-zero} $0 \in \{a,b\}$ and $0 \in \{c,d\}$; or
\item\label{Hector-associate} $0 \not\in \{a,b,c,d\}$ and both $[\alpha(z^m)]=[\delta(z^n)]$ and $[\beta(z^n)]=[\gamma(z^m)]$ hold, where $\alpha,\beta,\gamma,\delta$ are the sequences such that $a=\gcd(a,c) \alpha$, $c=\gcd(a,c) \gamma$, $b=\gcd(b,d) \beta$, and $d=\gcd(b,d) \delta$.
\item\label{Hector-conjugate} $0 \not\in \{a,b,c,d\}$ and both $[A(z^m)]=[\Delta(z^n)]$ and $[B(z^n)]=[\Gamma(z^m)]$ hold, where $A,B,\Gamma,\Delta$ are the sequences such that $a=\gcd(a,\conj{c}) A$, $\conj{c}=\gcd(a,\conj{c}) \Gamma$, $b=\gcd(b,\conj{d}) B$, and $\conj{d}=\gcd(b,\conj{d}) \Delta$.
\end{enumerate}
\end{lemma}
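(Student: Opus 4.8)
The plan is to reduce the whole statement to \cref{Deidre} by unwinding the definition of trivial equicorrelationality. Recall that $g$ is trivially equicorrelational to $f$ exactly when $g \in \te{f} = \ac{f}\cup\ac{\conj{f}}$, that is, when $\ac{f}=\ac{g}$ or $\ac{f}=\ac{\conj{g}}$ (the second condition being the restatement of $\ac{g}=\ac{\conj{f}}$ obtained by conjugating, since conjugation is a ring isomorphism sending associate classes to associate classes). Thus I would split the proof into two cases according to which of these two equalities of associate classes holds, determine the precise condition for each using \cref{Deidre}, and then take the disjunction.

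The first case is handled verbatim by \cref{Deidre}: the equality $\ac{f}=\ac{g}$ holds if and only if \ref{Deidre-zero} or \ref{Deidre-nonzero} holds, and these are exactly alternatives \ref{Hector-zero} and \ref{Hector-associate} of the present lemma. For the second case I first need to put $\conj{g}$ back into the product form to which \cref{Deidre} applies. Since conjugation is a ring homomorphism that commutes with each substitution $z\mapsto z^j$ (as observed in the proof of \cref{Victoria}), I have $\conj{c(z^m)}=\conj{c}(z^m)$ and $\conj{d(z^n)}=\conj{d}(z^n)$, and hence
\[
\conj{g}=\conj{c(z^m)\,d(z^n)}=\conj{c}(z^m)\,\conj{d}(z^n),
\]
which is again of the shape $C(z^m)D(z^n)$ with $C=\conj{c}$ and $D=\conj{d}$. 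Applying \cref{Deidre} to the pair $f,\conj{g}$---that is, with $c,d$ replaced by $\conj{c},\conj{d}$---then shows that $\ac{f}=\ac{\conj{g}}$ holds if and only if its zero alternative \ref{Deidre-zero} holds (which, because $0\in\{\conj{c},\conj{d}\}$ precisely when $0\in\{c,d\}$, is identical to \ref{Hector-zero}) or its nonzero alternative \ref{Deidre-nonzero} holds, the latter becoming exactly condition \ref{Hector-conjugate} once one substitutes $\conj{c},\conj{d}$ for $c,d$ and renames the four quotient sequences to $A,B,\Gamma,\Delta$.

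Combining the two cases gives that $f$ is trivially equicorrelational to $g$ if and only if \ref{Hector-zero}, \ref{Hector-associate}, or \ref{Hector-conjugate} holds. I do not expect a genuine obstacle here: the only points that need care are verifying that $\conj{g}$ keeps the product form---immediate from the commuting of conjugation with $z\mapsto z^j$---and the bookkeeping that the two zero-alternatives produced by the two cases coincide. Everything else is a direct citation of \cref{Deidre} applied twice.
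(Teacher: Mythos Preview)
Your proposal is correct and follows essentially the same approach as the paper: both proofs unwind trivial equicorrelationality as $[f]=[g]$ or $[f]=[\conj{g}]$, apply \cref{Deidre} directly to the first alternative, rewrite $\conj{g}=\conj{c}(z^m)\conj{d}(z^n)$ to apply \cref{Deidre} to the second, and then combine the resulting disjunctions. Your write-up is slightly more explicit about why conjugation commutes with the substitution $z\mapsto z^j$ and why the two zero alternatives coincide, but the argument is the same.
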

\begin{proof}
By the definition of trivial equicorrelationality, $f$ and $g$ are trivially equicorrelational if and only if either $[f]=[g]$ or $[f]=[\conj{g}]$.  \cref{Deidre} says that $[f]=[g]$ if and only if either \ref{Hector-zero} or \ref{Hector-associate} holds.  Since $\conj{g(z)}=\conj{c}(z^m)\conj{d}(z^n)$, \cref{Deidre} shows that $[f]=[\conj{g}]$ if and only if either \ref{Hector-zero} or \ref{Hector-conjugate} holds.
\end{proof}
\cref{Hector} applies to sequences $a$, $b$, $c$, and $d$ that need not be binary (nor, more generally, $k$-ary for some $k$), and even if these four sequences are binary (or $k$-ary), the parameters $m$ and $n$ might be such that the combined sequences $f$ and $g$ are not binary (or $k$-ary).
In many practical scenarios, we would want to constrain $a$, $b$, $c$, $d$, $m$, and $n$ so as to produce binary (or $k$-ary) $f$ and $g$, and we examine such situations in the following result, which makes use of \cref{Clarence}.
\begin{proposition}\label{Milton}
Let $k$, $\ell$, and $m$ be a positive integers.
Let $a$ and $c$ be equicorrelational $k$-ary sequences of length $\ell$.
Let $b$ and $d$ be equicorrelational $k$-ary sequences of length $m$.
Then $f(z)=a(z^m)b(z)$ and $g(z)=c(z^m)d(z)$ are equicorrelational $k$-ary sequences of length $\ell m$.
Furthermore, $f$ is trivially equicorrelational to $g$ if and only if at least one of the following two conditions holds:
\begin{enumerate}[label=(\roman*)]
\item\label{Milton-associate} $[a]=[c]$ and $[b]=[d]$; or
\item\label{Milton-conjugate} $[a]=[\conj{c}]$ and $[b]=[\conj{d}]$.
\end{enumerate}
In particular, if $a$ is nontrivially equicorrelational to $c$ or if $b$ is nontrivially equicorrelational to $d$, then $f$ is certainly nontrivially equicorrelational to $g$.
\end{proposition}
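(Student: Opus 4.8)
The plan is to derive everything from the machinery already assembled, applying \cref{Hector} in the special case $n=1$. First I would dispatch the easy assertions: \cref{Clarence} immediately gives that $f(z)=a(z^m)b(z)$ and $g(z)=c(z^m)d(z)$ are $k$-ary sequences of length $\ell m$, and \cref{Victoria} (with its parameter $n$ set equal to $1$) gives that $f$ and $g$ are equicorrelational, since $a$ is equicorrelational to $c$ and $b$ to $d$. This reduces the proposition to identifying exactly when $f$ and $g$ are \emph{trivially} equicorrelational.

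For that, I would invoke \cref{Hector} with $n=1$. Because $a$, $b$, $c$, $d$ are $k$-ary of positive length, none of them is $0$, so condition \ref{Hector-zero} never applies and only \ref{Hector-associate} and \ref{Hector-conjugate} remain. Writing $s=\gcd(a,c)$, $\alpha=a/s$, $\gamma=c/s$, $t=\gcd(b,d)$, $\beta=b/t$, $\delta=d/t$, condition \ref{Hector-associate} reads $[\alpha(z^m)]=[\delta(z)]$ and $[\beta(z)]=[\gamma(z^m)]$. The crux is a length count: since $\delta$ divides $d$, \cref{Leonard} gives $\len\delta\le\len d=m$, while \cref{Samuel} gives $\len\alpha(z^m)=(\len\alpha-1)m+1$. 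As $\laurc$-associates have equal length, $[\alpha(z^m)]=[\delta(z)]$ forces $(\len\alpha-1)m+1\le m$, which (because $\alpha\ne 0$ and $m\ge 1$) can hold only when $\len\alpha=1$, i.e.\ when $\alpha$ is a unit; then $\delta$ is a unit as well, being associate to the unit $\alpha(z^m)$ (using \cref{Dorothy}). The same count applied to $[\beta(z)]=[\gamma(z^m)]$ makes $\gamma$ and $\beta$ units. Since $\alpha,\gamma$ are units precisely when $[a]=[s]=[c]$ and $\beta,\delta$ are units precisely when $[b]=[t]=[d]$, condition \ref{Hector-associate} amounts to $[a]=[c]$ and $[b]=[d]$, which is \ref{Milton-associate}; the converse direction is immediate because those associate relations make all four cofactors units.

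Condition \ref{Hector-conjugate} I would treat identically, now with $A=a/\gcd(a,\conj{c})$, $\Gamma=\conj{c}/\gcd(a,\conj{c})$, $B=b/\gcd(b,\conj{d})$, $\Delta=\conj{d}/\gcd(b,\conj{d})$, using \cref{Lester} to note that $\len\conj{c}=\ell$ and $\len\conj{d}=m$ so that the same length inequalities apply. This shows \ref{Hector-conjugate} is equivalent to $[a]=[\conj{c}]$ and $[b]=[\conj{d}]$, namely \ref{Milton-conjugate}. Combining the two cases proves that $f$ is trivially equicorrelational to $g$ if and only if \ref{Milton-associate} or \ref{Milton-conjugate} holds. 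The ``in particular'' clause then drops out: $f$ and $g$ are already equicorrelational, so failure of trivial equicorrelationality is exactly nontrivial equicorrelationality, and if $a$ is nontrivially equicorrelational to $c$ then $[a]\ne[c]$ and $[a]\ne[\conj{c}]$, defeating both \ref{Milton-associate} and \ref{Milton-conjugate} (and symmetrically if $b$ is nontrivially equicorrelational to $d$).

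The step I expect to be the main obstacle is the length bookkeeping that collapses the divisibility conditions of \cref{Hector} into plain unit conditions: one must verify carefully that the cofactors really divide sequences of length $\ell$ or $m$, and hence are short enough that the substitution $z\mapsto z^m$ can match a sequence of length at most $m$ only when the cofactor is a unit.
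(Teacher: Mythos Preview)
Your proposal is correct and follows essentially the same approach as the paper: both invoke \cref{Clarence} and \cref{Victoria} for the easy claims, then apply \cref{Hector} with $n=1$ and use a length count (via \cref{Samuel}, \cref{Leonard}/\cref{Lester}, and \cref{Dorothy}) to force the cofactors $\alpha,\beta,\gamma,\delta$ (resp.\ $A,B,\Gamma,\Delta$) to be units, collapsing the Hector conditions to the Milton conditions. The only cosmetic difference is that the paper phrases the length constraint as ``a positive integer that is $1\pmod m$ and at most $m$ must equal $1$,'' whereas you write the equivalent inequality $(\len\alpha-1)m+1\le m$ directly.
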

\begin{proof}
The fact that $f$ and $g$ are $k$-ary sequences of length $\ell m$ comes from \cref{Clarence}, and the fact that they are equicorrelational comes from \cref{Victoria}.
Notice that $a$, $b$, $c$, and $d$ are all nonzero because of the given lengths of these sequences.

If case \ref{Milton-associate} of this proposition holds, then case \ref{Hector-associate} of \cref{Hector} holds because the $\alpha$, $\beta$, $\gamma$, and $\delta$ defined there are all units, so then \cref{Dorothy} makes $\alpha(z^m)$ and $\gamma(z^m)$ units, so that $[\alpha(z^m)]=[\delta(z)]$ and $[\beta(z)]=[\gamma(z^m)]$.
If case \ref{Milton-conjugate} of this proposition holds, then case \ref{Hector-conjugate} of \cref{Hector} holds, because the $A$, $B$, $\Gamma$, and $\Delta$ defined there are all units, so then \cref{Dorothy} makes $A(z^m)$ and $\Gamma(z^m)$ units, so that $[A(z^m)]=[\Delta(z)]$ and $[B(z)]=[\Gamma(z^m)]$.
So \cref{Hector} shows that $f$ and $g$ are trivially equicorrelational if either condition \ref{Milton-associate} or \ref{Milton-conjugate} of this proposition holds.

Conversely, suppose that $f$ and $g$ are trivially equicorrelational.
Then \cref{Hector} applies, and since $a$, $b$, $c$, and $d$ are nonzero, we must be in either case \ref{Hector-associate} or case \ref{Hector-conjugate} of \cref{Hector}.
If \cref{Hector}\ref{Hector-associate} holds, then define $\alpha$, $\beta$, $\gamma$, and $\delta$ as they are there, and we have $[\alpha(z^m)]=[\delta(z)]$ and $[\beta(z)]=[\gamma(z^m)]$.
Otherwise, \cref{Hector}\ref{Hector-conjugate} holds, and then define $A$, $B$, $\Gamma$, and $\Delta$ as they are there, and we have $[A(z^m)]=[\Delta(z)]$ and $[B(z)]=[\Gamma(z^m)]$.
In the former case, set $\fa=\alpha$, $\fb=\beta$, $\fc=\gamma$, and $\fd=\delta$, and in the latter, set $\fa=A$, $\fb=B$, $\fc=\Gamma$, and $\fd=\Delta$.
So in either case we have $[\fa(z^m)]=[\fd(z)]$ and $[\fb(z)]=[\fc(z^m)]$.
Since associates have the same length, we may use \cref{Samuel} to conclude that $(-1+\len\fa)m+1=\len\fd$ and $(-1+\len\fc)m+1=\len\fb$.
By \cref{Lester} and our given assumptions, we have $\len b=\len \conj{b} =\len d =\len \conj{d}=m > 0$.
Since $\fb$ is a divisor of either $b$ or $\conj{b}$ and since $\fd$ is a divisor of either $d$ or $\conj{d}$ (and $m> 0$ makes all these sequences nonzero), we know that $0 < \len \fb \leq m$ and $0 < \len \fd \leq m$.
So $\len\fb$ and $\len\fd$ are positive integers that are both $1$ modulo $m$ and not greater than $m$.
Hence $\len\fb=\len\fd=1$, making $\fb$ and $\fd$ units.
But $[\fa(z^m)]=[\fd(z)]$ and $[\fb(z)]=[\fc(z^m)]$, and associates have the same length, so we may use \cref{Samuel} to conclude that $\len\fa=\len\fc=1$, and so $\fa$ and $\fc$ are also units.
If we are in case \ref{Hector-associate} of \cref{Hector}, this makes $\alpha$, $\beta$, $\gamma$, and $\delta$ there units, and this implies that $[a]=[c]$ and $[b]=[d]$, so we are in case \ref{Milton-associate} of this proposition.
If we are in case \ref{Hector-conjugate} of \cref{Hector}, this makes $A$, $B$, $\Gamma$, and $\Delta$ there units, and this implies that $[a]=[\conj{c}]$ and $[b]=[\conj{d}]$, so we are in case \ref{Milton-conjugate} of this proposition.
This completes the proof of the claim that $[f]=[g]$ if and only if we are in either case \ref{Milton-associate} or \ref{Milton-conjugate} of this proposition, from which the final claim of the proposition follows.
\end{proof}

Now we are ready to restate and prove \cref{Karen}.
\begin{proposition}\label{Ethel}
Let $m$, $n$ be positive integers such that $m | n$. If $m$ is equivocal, then $n$ is equivocal.
\end{proposition}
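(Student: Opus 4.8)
The plan is to bootstrap from the hypothesis that $m$ is equivocal by combining the witnessing pair of length-$m$ sequences with a fixed length-$\ell$ sequence, using \cref{Milton} to preserve both equicorrelationality and its nontriviality. Since $m \mid n$, I would first write $n = \ell m$, where $\ell = n/m$ is a positive integer. Next I would unpack the hypothesis: because $m$ is equivocal, there is an equivocal binary sequence $b$ of length $m$, which by definition is nontrivially equicorrelational to some other binary sequence $d$; since equicorrelational sequences have the same length, $d$ also has length $m$.

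I would then supply the remaining input pair required by \cref{Milton}. Let $a = c$ be any binary sequence of length $\ell$ --- for concreteness, the all-ones sequence $a(z) = 1 + z + \cdots + z^{\ell-1}$, which is a binary sequence of length $\ell$ for every $\ell \geq 1$. Because $a = c$, they are $\laurc$-associates and hence equicorrelational by \cref{Ursula}.

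The main step is then a direct application of \cref{Milton} with $k = 2$: the pair $a, c$ consists of equicorrelational binary sequences of length $\ell$, and the pair $b, d$ consists of equicorrelational binary sequences of length $m$, so $f(z) = a(z^m) b(z)$ and $g(z) = c(z^m) d(z)$ are equicorrelational binary sequences of length $\ell m = n$. Crucially, the final ``in particular'' clause of \cref{Milton} asserts that if $b$ is nontrivially equicorrelational to $d$, then $f$ is nontrivially equicorrelational to $g$; this is exactly our situation. Hence $f$ is an equivocal binary sequence of length $n$, so $n$ is equivocal, as desired.

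I do not expect a genuine obstacle here, since \cref{Milton} (resting on \cref{Clarence}) has already absorbed the technical work of verifying that the constructed sequences are binary of the correct length and of characterizing exactly when the output pair is trivially equicorrelational. The only points requiring a little care are to route the nontriviality through the ``in particular'' clause of \cref{Milton} rather than re-deriving the trivial-equicorrelationality criterion, and to note that the degenerate case $\ell = 1$ (that is, $n = m$) is handled uniformly, since there $a = c = 1$ forces $f = b$ and $g = d$ and the conclusion is immediate.
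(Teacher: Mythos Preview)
Your proof is correct and follows essentially the same approach as the paper's: both take $a=c$ to be a binary sequence of length $n/m$ and invoke \cref{Milton} (in particular its final clause) on the nontrivially equicorrelational pair $b,d$ of length $m$ to produce a nontrivially equicorrelational binary pair of length $n$. Your concrete choice $a(z)=1+z+\cdots+z^{\ell-1}$ is exactly the one the paper highlights in the remark following its proof.
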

\begin{proof}
Suppose that $m$ is equivocal, and so there are nontrivially equicorrelational binary sequences $b$ and $d$ of length $m$.
Let $a=c$ be some binary sequence of length $n/m$.
Then $f(z)=a(z^m) b(z)$ and $g(z)=c(z^m) d(z)$ are nontrivially equicorrelational binary sequences of length $n$ by \cref{Milton}.
\end{proof}

\begin{remark}
In the proof of \cref{Ethel}, if one uses $a(z)=c(z)=1+z+\cdots+z^{n/m-1}$ and if we use $u|v$ to denote the concatenation of two sequences $u$ and $v$, then the proof could be summarized by saying that if $b$ and $d$ are nontrivially equicorrelational, then
\[
\underbrace{b|b|\cdots|b}_{\text{$n/m$ copies}} \text{ and } \underbrace{d|d|\cdots|d}_{\text{$n/m$ copies}}
\]
are nontrivially equicorrelational.
Although we could have proved \cref{Ethel} more quickly by confining ourselves to this basic construction, we proved the more general results presented in \cref{Hector} and \cref{Milton} in order to show that there are many ways in which nontrivial equicorrelationality of longer sequences can arise from nontrivial equicorrelationality of shorter sequences.  If $n$ is an equivocal number, it would be interesting to see how many of the nontrivially equicorrelational pairs of binary sequences of length $n$ can be accounted for via \cref{Milton} as arising from nontrivial equicorrelationality of sequences of some smaller length $m$ with $m\mid n$.
\end{remark}

\section*{Acknowledgments}

The authors thank Showmic Islam, Christina Koch, Andrew Owen, and Mats Rynge of the Open Science Grid for their assistance.  The authors thank anonymous reviewers for encouragement and comments which helped them to improve the paper.

\newcommand{\etalchar}[1]{$^{#1}$}

\end{document}